\DeclarePairedDelimiter\ceil{\lceil}{\rceil}
\newtheorem{remark}{Result}
\newtheorem{Theorem}{Theorem}
\newtheorem{Lemma}{Lemma}
\title{Optimal Time of Arrival Estimation for MIMO Backscatter Channels}
\author{Chen~He,~\IEEEmembership{Member,~IEEE},~
Luyang Han ~and~
Z.~Jane~Wang,~\IEEEmembership{Fellow,~IEEE}
        }
\def\thanks#1{\protected@xdef\@thanks{\@thanks
        \protect\footnotetext{#1}}}
\begin{document}
\thanks{Chen He (\textit{corresponding author}) is with School of Information Science and Technology, Northwest University, Xi' an, China (email: chenhe@nwu.edu.cn).}
\thanks{Luyang Han is with School of Information Science and Technology, Northwest University, Xi' an, China (email: hanluyang@stumail.nwu.edu.cn).}
\thanks{Z. Jane Wang is with Department of Electrical and Computer Engineering, University of British Columbia, Vancouver, BC, Canada (email: zjanew@ece.ubc.ca).}

\maketitle

\begin{abstract}
In this paper, we propose a novel time of arrival (TOA) estimator for multiple-input-multiple-output (MIMO) backscatter channels in closed form. The proposed estimator refines the estimation precision from the topological structure of the MIMO backscatter channels, and can considerably enhance the estimation accuracy. Particularly, we show that for the general $M \times N$ bistatic topology, the mean square error (MSE) is $\frac{M+N-1}{MN}\sigma^2_0$, and for the general $M \times M$ monostatic topology, it is $\frac{2M-1}{M^2}\sigma^2_0$ for the diagonal subchannels, and $\frac{M-1}{M^2}\sigma^2_0$ for the off-diagonal subchannels, where $\sigma^2_0$ is the MSE of the conventional least square estimator.
In addition, we derive the  Cramer-Rao lower bound (CRLB) for MIMO backscatter TOA estimation which indicates that the proposed estimator is optimal. Simulation results verify that the proposed TOA estimator can considerably improve both estimation and positioning accuracy, especially when the MIMO scale is large.
\end{abstract}

\begin{IEEEkeywords}
TOA estimation, MIMO backscatter channels, Cramer-Rao lower bound.
\end{IEEEkeywords}

\section{Introduction}
Backscatter communications (BSC) is an emerging  technology in internet of things (IoT). The BSC system typically consists of a reader and multiple tags. Its hallmark is that the tag does not require internal battery, instead, it reflects the electromagnetic wave from the reader or the base station to transmit its information. This enables green, low-cost and sustainable communications and sensing technologies for future IoT. Since the tag is always passive, compared with the conventional communications, it is more difficult to achieve high-speed and reliable communication for BSC system and several works focus on improving the communication quality of BSC system, such as space-time coding \cite{he2020monostatic, luan2021better}, channel estimation\cite{mishra2019optimal},  signal detection\cite{zhang2018constellation}, etc.
Recently, localization and tracking are drawing increasing attention for backscatter communications. Existing localization scheme mainly based on  angle of arrival (AOA), received signal strength (RSS), time of arrival (TOA) and time difference of arrival (TDOA) \cite{sun2020eigenspace,liu2015rss,park2015closed, cheung2004least, gillette2008linear}.  The ranging accuracy depends on the accuracy of parameter estimation.
RSS estimation is mainly based on the model of pass loss which can be seriously  affected by the multipath. AOA estimation is mainly based on array antenna structure of receiver, which usually requires complex hardware. Therefore TOA or TDOA based localization and tracking is sometimes a good choice for BSC system \cite{sahinoglu2008ultra}. 



\begin{figure*}
\centerline{\includegraphics[width=1.98\columnwidth]{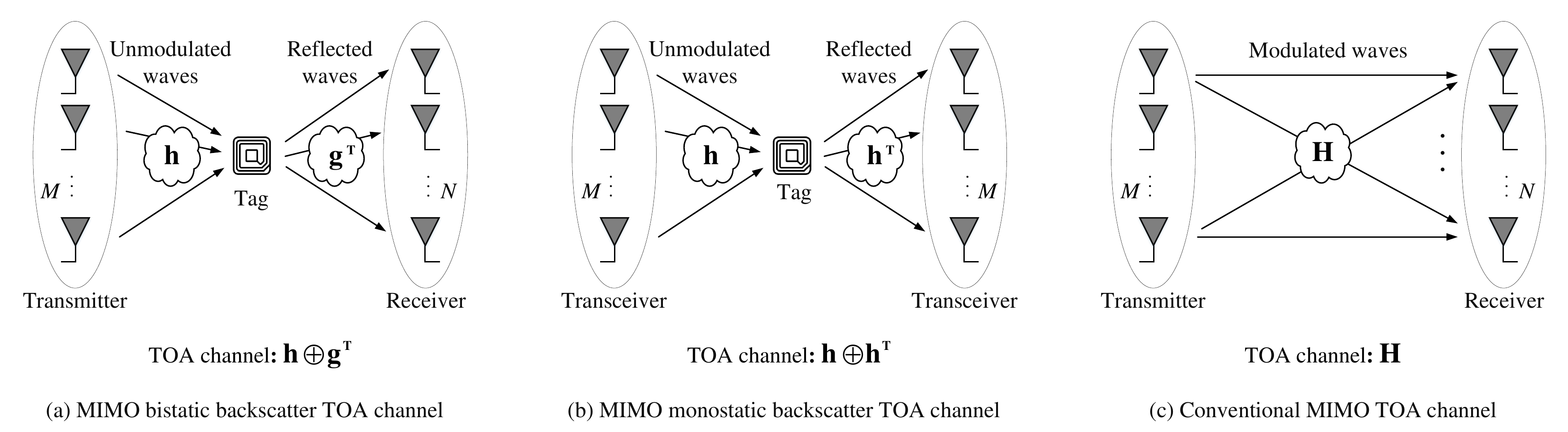}}
\caption{The illustration of the TOAs of the MIMO bistatic backscatter channel, the MIMO monostatic backscatter channel, and the conventional MIMO channel.}
\label{fig: BSCchannel}
\end{figure*}

\begin{figure*}

\begin{center}
\begin{tikzpicture}[scale=0.8, transmitter/.style={circle,draw=black!50,thick,inner sep=1.5mm},
tag/.style={rectangle,draw=black!50,thick,inner sep=2mm},
receiver/.style={circle,draw=black!50,fill=black!20,thick,inner sep=1.5mm},
transceiver/.style={circle,draw=black!50,pattern=vertical lines,thick,inner sep=1.5mm}]

\tikzstyle{every node}=[font=\small,scale=0.8]

\tikzmath{ \x = -2.5; };
\tikzmath{ \y = 2; };
\node[transmitter] at ( \x,\y) {}; 
\node[font=\fontsize{10}{10}\selectfont] at ( \x+1.2,\y) {Transmitter};
\node[tag] at ( \x+2.5,\y) {}; 
\node[font=\fontsize{10}{10}\selectfont] at ( \x+3.1,\y) {Tag};
\node[receiver] at ( \x+3.8,\y) {}; 
\node[font=\fontsize{10}{10}\selectfont] at ( \x+4.8,\y) {Receiver};

\node[transmitter] (t1) at ( \x-0.5,\y-1) {};
\node[transmitter] (t2) at ( \x+1.5,\y-1) {};
\node[tag] (t) at ( \x+0.5,\y-2)  {};
\node[receiver] (r1) at ( \x-0.5,\y-3)  {};
\node[receiver] (r2) at ( \x+1.5,\y-3) {};
\node[font=\fontsize{10}{10}\selectfont] at ( \x+3,\y-2) {has subchannels:};

\draw[thick,-{Stealth[width=4pt]}] (node cs:name=t1) -- (node cs:name=t)
node[left,pos=0.7,font=\fontsize{13}{13}\selectfont]{$h_1$};
\draw[thick,-{Stealth[width=4pt]}] (node cs:name=t2) -- (node cs:name=t) node[right,pos=0.7,font=\fontsize{13}{13}\selectfont]{$h_2$};
\draw[thick,-{Stealth[width=4pt]}] (node cs:name=t) -- (node cs:name=r1) node[right,pos=0.6,font=\fontsize{13}{13}\selectfont]{$g_1$};
\draw[thick,-{Stealth[width=4pt]}] (node cs:name=t) -- (node cs:name=r2) node[left,pos=0.6,font=\fontsize{13}{13}\selectfont]{$g_2$};

\node[transmitter] (t11) at ( \x+0.5,\y-4) {};
\node[tag] (tag1) at ( \x+1.5,\y-5)  {};
\node[receiver] (r11) at ( \x+0.5,\y-6)  {};

\draw[thick,-{Stealth[width=4pt]}] (node cs:name=t11) -- (node cs:name=tag1) node[left,pos=0.7,font=\fontsize{13}{13}\selectfont]{$h_1$};
\draw[thick,-{Stealth[width=4pt]}] (node cs:name=tag1) -- (node cs:name=r11) node[right,pos=0.6,font=\fontsize{13}{13}\selectfont]{$g_1$};
\node at ( \x+3,\y-2) {};
\node[font=\fontsize{13}{13}\selectfont] at ( \x+1.3,\y-6.7) {$t_1=h_1+g_1$};

\node[transmitter] (t12) at ( \x+5.5,\y-4) {};
\node[tag] (tag2) at ( \x+4.5,\y-5)  {};
\node[receiver] (r12) at ( \x+3.5,\y-6)  {};

\draw[thick,-{Stealth[width=4pt]}] (node cs:name=t12) -- (node cs:name=tag2) node[right,pos=0.7,font=\fontsize{13}{13}\selectfont]{$h_2$};
\draw[thick,-{Stealth[width=4pt]}] (node cs:name=tag2) -- (node cs:name=r12) node[right,pos=0.6,font=\fontsize{13}{13}\selectfont]{$g_1$};
\node[font=\fontsize{13}{13}\selectfont] at ( \x+4.5,\y-6.7) {$t_2=h_2+g_1$};

\node[transmitter] (t11) at ( \x+0.5,\y-7.5) {};
\node[tag] (tag1) at ( \x+1.5,\y-8.5)  {};
\node[receiver] (r11) at ( \x+2.5,\y-9.5)  {};

\draw[thick,-{Stealth[width=4pt]}] (node cs:name=t11) -- (node cs:name=tag1) node[left,pos=0.7,font=\fontsize{13}{13}\selectfont]{$h_1$};
\draw[thick,-{Stealth[width=4pt]}] (node cs:name=tag1) -- (node cs:name=r11) node[left,pos=0.6,font=\fontsize{13}{13}\selectfont]{$g_2$};
\node at ( \x+3,\y-2) {};
\node[font=\fontsize{13}{13}\selectfont] at ( \x+1.3,\y-6.7-3.5) {$t_3=h_1+g_2$};

\node[transmitter] (t12) at ( \x+5.5,\y-4-3.5) {};
\node[tag] (tag2) at ( \x+4.5,\y-5-3.5)  {};
\node[receiver] (r12) at ( \x+5.5,\y-6-3.5)  {};

\draw[thick,-{Stealth[width=4pt]}] (node cs:name=t12) -- (node cs:name=tag2) node[right,pos=0.7,font=\fontsize{13}{13}\selectfont]{$h_2$};
\draw[thick,-{Stealth[width=4pt]}] (node cs:name=tag2) -- (node cs:name=r12) node[left,pos=0.6,font=\fontsize{13}{13}\selectfont]{$g_2$};
\node[font=\fontsize{13}{13}\selectfont] at ( \x+4.5,\y-6.7-3.5) {$t_4=h_2+g_2$};

\node[font=\fontsize{10}{10}\selectfont] at ( \x+3,\y-11.3) {The subchannels are correlated in the pattern of:};
\node[font=\fontsize{13}{13}\selectfont] at ( \x+3,\y-11.8) {$t_1+t_4=t_2+t_3$};
\node[font=\fontsize{10}{10}\selectfont] at ( \x+3,\y-12.5) {(a) MIMO bistatic backscatter TOA channel};

\tikzmath{ \x = 5.5; };
\tikzmath{ \y = 2; };
\draw [pattern=north west lines, pattern color=black] (\x,\y) circle;
\node[transceiver] at ( \x+0.2,\y) {}; 
\node[font=\fontsize{10}{10}\selectfont] at ( \x+1.4,\y) {Transceiver};
\node[tag] at ( \x+3,\y) {}; 
\node[font=\fontsize{10}{10}\selectfont] at ( \x+3.6,\y) {Tag};

\node[transceiver] (t1) at ( \x-0.5,\y-1) {};
\node[transceiver] (t2) at ( \x+1.5,\y-1) {};
\node[tag] (t) at ( \x+0.5,\y-2)  {};
\node[transceiver] (r1) at ( \x-0.5,\y-3)  {};
\node[transceiver] (r2) at ( \x+1.5,\y-3) {};
\node[font=\fontsize{10}{10}\selectfont] at ( \x+3,\y-2) {has subchannels:};

\draw[thick,-{Stealth[width=4pt]}] (node cs:name=t1) -- (node cs:name=t) node[left,pos=0.7,font=\fontsize{13}{13}\selectfont]{$h_1$};
\draw[thick,-{Stealth[width=4pt]}] (node cs:name=t2) -- (node cs:name=t) node[right,pos=0.7,font=\fontsize{13}{13}\selectfont]{$h_2$};
\draw[thick,-{Stealth[width=4pt]}] (node cs:name=t) -- (node cs:name=r1) node[right,pos=0.7,font=\fontsize{13}{13}\selectfont]{$h_1$};
\draw[thick,-{Stealth[width=4pt]}] (node cs:name=t) -- (node cs:name=r2) node[left,pos=0.7,font=\fontsize{13}{13}\selectfont]{$h_2$};

\node[transceiver] (t11) at ( \x+0.5,\y-4) {};
\node[tag] (tag1) at ( \x+1.5,\y-5)  {};
\node[transceiver] (r11) at ( \x+0.5,\y-6)  {};

\draw[thick,-{Stealth[width=4pt]}] (node cs:name=t11) -- (node cs:name=tag1) node[left,pos=0.7,font=\fontsize{13}{13}\selectfont]{$h_1$};
\draw[thick,-{Stealth[width=4pt]}] (node cs:name=tag1) -- (node cs:name=r11) node[right,pos=0.7,font=\fontsize{13}{13}\selectfont]{$h_1$};
\node at ( \x+3,\y-2) {};
\node[font=\fontsize{13}{13}\selectfont] at ( \x+1.3,\y-6.7) {$t_1=h_1+h_1$};

\node[transceiver] (t12) at ( \x+5.5,\y-4) {};
\node[tag] (tag2) at ( \x+4.5,\y-5)  {};
\node[transceiver] (r12) at ( \x+3.5,\y-6)  {};

\draw[thick,-{Stealth[width=4pt]}] (node cs:name=t12) -- (node cs:name=tag2) node[right,pos=0.7,font=\fontsize{13}{13}\selectfont]{$h_2$};
\draw[thick,-{Stealth[width=4pt]}] (node cs:name=tag2) -- (node cs:name=r12) node[right,pos=0.7,font=\fontsize{13}{13}\selectfont]{$h_1$};
\node[font=\fontsize{13}{13}\selectfont] at ( \x+4.5,\y-6.7) {$t_2=h_2+h_1$};

\node[transceiver] (t11) at ( \x+0.5,\y-7.5) {};
\node[tag] (tag1) at ( \x+1.5,\y-8.5)  {};
\node[transceiver] (r11) at ( \x+2.5,\y-9.5)  {};

\draw[thick,-{Stealth[width=4pt]}] (node cs:name=t11) -- (node cs:name=tag1) node[left,pos=0.7,font=\fontsize{13}{13}\selectfont]{$h_1$};
\draw[thick,-{Stealth[width=4pt]}] (node cs:name=tag1) -- (node cs:name=r11) node[left,pos=0.7,font=\fontsize{13}{13}\selectfont]{$h_2$};
\node at ( \x+3,\y-2) {};
\node[font=\fontsize{13}{13}\selectfont] at ( \x+1.3,\y-6.7-3.5) {$t_3=h_1+h_2$};

\node[transceiver] (t12) at ( \x+5.5,\y-4-3.5) {};
\node[tag] (tag2) at ( \x+4.5,\y-5-3.5)  {};
\node[transceiver] (r12) at ( \x+5.5,\y-6-3.5)  {};

\draw[thick,-{Stealth[width=4pt]}] (node cs:name=t12) -- (node cs:name=tag2) node[right,pos=0.7,font=\fontsize{13}{13}\selectfont]{$h_2$};
\draw[thick,-{Stealth[width=4pt]}] (node cs:name=tag2) -- (node cs:name=r12) node[left,pos=0.7,font=\fontsize{13}{13}\selectfont]{$h_2$};
\node[font=\fontsize{13}{13}\selectfont] at ( \x+4.5,\y-6.7-3.5) {$t_4=h_2+h_2$};

\node[font=\fontsize{10}{10}\selectfont] at ( \x+3,\y-11) {The subchannels are correlated in the pattern of:};
\node[font=\fontsize{13}{13}\selectfont] at ( \x+3,\y-11.5) {$t_1+t_4=t_2+t_3$};
\node[font=\fontsize{13}{13}\selectfont] at ( \x+3,\y-12) {$t_2=t_3$};
\node[font=\fontsize{10}{10}\selectfont] at ( \x+3,\y-12.5) {(b) MIMO monostatic backscatter TOA channel};

\tikzmath{ \x = 13.5; };
\tikzmath{ \y = 2; };
\node[transmitter] at ( \x+0.3,\y) {};
\node[font=\fontsize{10}{10}\selectfont] at ( \x+1.5,\y) {Transmitter};
\node[receiver] at ( \x+3.3,\y) {}; 
\node[font=\fontsize{10}{10}\selectfont] at ( \x+4.3,\y) {Receiver};

\node[transmitter] (t1) at ( \x-0.5,\y-1) {};
\node[transmitter] (t2) at ( \x+1.5,\y-1) {};
\node[receiver] (r1) at ( \x-0.5,\y-3)  {};
\node[receiver] (r2) at ( \x+1.5,\y-3) {};
\node[font=\fontsize{10}{10}\selectfont] at ( \x+3.5,\y-2) {has subchannels:};

\draw[thick,-{Stealth[width=4pt]}] (node cs:name=t1) -- (node cs:name=r1) node[left,pos=0.5,font=\fontsize{13}{13}\selectfont]{$h_1$};
\draw[thick,-{Stealth[width=4pt]}] (node cs:name=t2) -- (node cs:name=r1) node[right,pos=0.4,font=\fontsize{13}{13}\selectfont]{$h_2$};
\draw[thick,-{Stealth[width=4pt]}] (node cs:name=t1) -- (node cs:name=r2) node[left,pos=0.4,font=\fontsize{13}{13}\selectfont]{$h_3$};
\draw[thick,-{Stealth[width=4pt]}] (node cs:name=t2) -- (node cs:name=r2) node[right,pos=0.5,font=\fontsize{13}{13}\selectfont]{$h_4$};

\node[transmitter] (t11) at ( \x+0.5,\y-4) {};
\node[receiver] (r11) at ( \x+0.5,\y-6)  {};

\draw[thick,-{Stealth[width=4pt]}] (node cs:name=t11) -- (node cs:name=r11) node[left,pos=0.5,font=\fontsize{13}{13}\selectfont]{$h_1$};
\node[font=\fontsize{13}{13}\selectfont] at ( \x+1,\y-6.7) {$t_1=h_1$};

\node[transmitter] (t12) at ( \x+5.5,\y-4) {};
\node[receiver] (r12) at ( \x+3.5,\y-6)  {};

\draw[thick,-{Stealth[width=4pt]}] (node cs:name=t12) -- (node cs:name=r12) node[right,pos=0.5,font=\fontsize{13}{13}\selectfont]{$h_2$};
\node[font=\fontsize{13}{13}\selectfont] at ( \x+4.5,\y-6.7) {$t_2=h_2$};

\node[transmitter] (t11) at ( \x+0.5,\y-7.5) {};
\node[receiver] (r11) at ( \x+2.5,\y-9.5)  {};

\draw[thick,-{Stealth[width=4pt]}] (node cs:name=t11) -- (node cs:name=r11) node[left,pos=0.5,font=\fontsize{13}{13}\selectfont]{$h_3$};

\node[font=\fontsize{13}{13}\selectfont] at ( \x+1,\y-6.7-3.5) {$t_3=h_3$};

\node[transmitter] (t12) at ( \x+5.5,\y-4-3.5) {};
\node[receiver] (r12) at ( \x+5.5,\y-6-3.5)  {};

\draw[thick,-{Stealth[width=4pt]}] (node cs:name=t12) -- (node cs:name=r12) node[right,pos=0.5,font=\fontsize{13}{13}\selectfont]{$h_4$};
\node[font=\fontsize{13}{13}\selectfont] at ( \x+4.5,\y-6.7-3.5) {$t_4=h_4$};

\node[font=\fontsize{10}{10}\selectfont] at ( \x+3,\y-11.5) {The subchannels are topologically independent.};
\node[font=\fontsize{10}{10}\selectfont] at ( \x+3,\y-12.5) {(c) Conventional MIMO TOA channel};

\end{tikzpicture}
\end{center}
\caption{Comparison of the topological structure of the MIMO backscatter TOA channels and the conventional MIMO TOA channel for $M=2$, and $N=2$.}
\label{fig: comBSCMIMO}
\end{figure*}
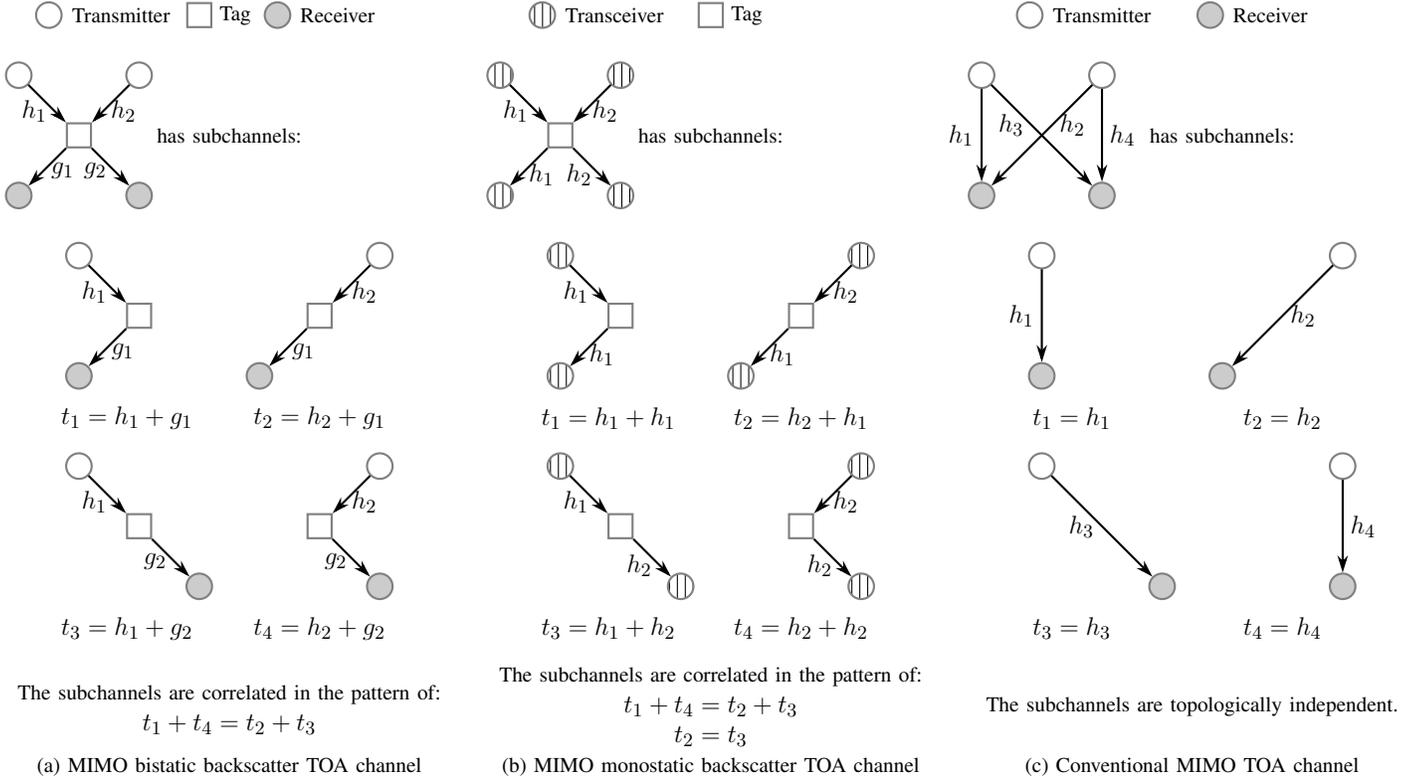

TOA estimation is usually obtained via least square (LS) in the conventional MIMO channel. In  backscatter channels, since the tag is a passive component which does not have any computational capacity,  the channel estimation can only be done at the receiver. As shown in Fig. \ref{fig: BSCchannel}, there are two major configurations for MIMO backscatter channels: the $M \times N$ bistatic configuration, which has $M$ transmitting antennas and $N$ receiving antennas, and the $M \times M$ monostatic configuration, which has $M$ antennas for both transmitting and receiving. For the bistatic configuration, the $M$ transmitting antennas transmit unmodulated waves to the tag, then the tag reflects waves to the receiving antennas by utilizing the energy of unmodulated waves. In this case, the transmitter and the receiver are different antennas and placed geographically apart. Therefore, the uplink channel (transmitter to tag) and the downlink channel (tag to receiver) are different, and the time delay of any TOA channel (transmitter to receiver) can be expressed as the sum of corresponding uplink delay and downlink delay. The monostatic configuration has a full-duplex architecture, i.e., the transmitter and receiver employ the same set of antennas,  so the uplink channel and the downlink channel can be modeled as the same channel.
The backscattering principal makes the MIMO structure fundamentally different from that of the conventional channels.


In this paper, we propose a novel TOA estimator by employing the topological structure of MIMO backscatter channels to refine the estimation accuracy. Therefore the proposed method in this paper is radically different from the works \cite{win2002characterization, d2008energy, bialer2011efficient, giorgetti2013time,kassas2021joint, shang2014ml, shamaei2021receiver,gifford2020impact, lee2002ranging, aditya2018survey,falsi2006time, xu2010toa}, which focused on the TOA estimation with certain receiving waveform, certain receiver structure, or with the consideration of the multipath effect. The proposed method is also radically different from the works in \cite{roemer2010tensor,sidiropoulos2017tensor,nion2010tensor,lioliou2008least,zhou2016channel,rong2012channel,mishra2019optimal,zhao2019channel}, which focused on the phase and amplitude estimation of MIMO channels, and cannot be employed for the TOA estimation of MIMO backscatter channels.


To the best of our knowledge, there is no work focus on the TOA estimation from the perspective of topological structure of MIMO backscatter channels
and the major contributions of this work are summarized as following:

\begin{itemize}
    \item  We propose a novel TOA estimator for MIMO backscatter channels in closed form that can significantly enhance the estimation precision compared with the conventional TOA estimator. Particularly, the proposed estimator employs the topological structure of the MIMO backscatter channels as constraints to refine the estimation accuracy.
    
    \item  We provide rigorous mathematical analysis to show  by how much the enhancement can be achieved by the proposed estimator. Particularly, for the general $M \times N$ bistatic topology, the mean square error (MSE) is $\frac{M+N-1}{MN}\sigma^2_0$, and for the general $M \times M$ monostatic topology, it is $\frac{2M-1}{M^2}\sigma^2_0$ for the diagonal subchannels, and $\frac{M-1}{M^2}\sigma^2_0$ for the off-diagonal subchannels, where $\sigma^2_0$ is the MSE of the conventional estimator.
    
    \item We derive the Cramer-Rao lower bounds (CRLB) for the TOA estimation of MIMO backscatter channels and show that the proposed estimator achieves the bound and hence it is optimal in terms of MSE.
\end{itemize}

We use boldfaced lower and upper cases for vectors and matrices, repectively. $\mathbf{X}^\intercal$, $\mathbf{X}^{-1}$, $\mathbf{X}^\dagger$, $\|\mathbf{X}\|$, and $[\mathbf{X}]_{i,j}$ denote the transpose, inverse, Moore-Penrose pseudoinverse, Frobenius norm, and the element at $i$-th row and $j$-th column of matrix $\mathbf{X}$, respectively.  $\operatorname{rank}(\mathbf{X})$ and $\operatorname{trace}(\mathbf{X})$ are the rank and trace of $\mathbf{X}$, respectively. $\mathbf{0}_{m\times n}$, $\mathbf{1}_{m\times n}$ and $\mathbf{I}_{n}$ represent the $m\times n$ zeros matrix, the $m\times n$ ones matrix and the $n\times n$ identity matrix, respectively. $\mathbb{R}$ and $\mathbb{C}$ denote real and complex number sets, respectively.  $\operatorname{E}\left\{\cdot\right\}$, $\operatorname{Var}\{\cdot\}$, and $\operatorname{Cov}\{\cdot, \cdot\}$ denote the expectation, the variance, and the covariance operations, respectively. $\ceil{\cdot}$ is the ceil operation. The operator $\text{vec}(\cdot)$ vectorizes a $M\times N$ matrix by column, while $\text{unvec}_{M\times N}(\cdot)$ is the inverse operation of $\text{vec}(\cdot)$. Finally, we use $\otimes$ to denote Kronecker product, and for $\mathbf{a}\in \mathbb{R}^{M\times 1}$, $\mathbf{b}\in \mathbb{R}^{1\times N}$, and $\mathbf{X}\in \mathbb{R}^{M\times N}$, the operator $\oplus$ is defined as $\mathbf{X}=\mathbf{a}\oplus \mathbf{b}=\mathbf{b}\oplus \mathbf{a}$ such that $[\mathbf{X}]_{i,j}=a_i+b_j$.

\section{MIMO backscatter TOA estimation}
\subsection{Channel model}
As shown in Fig. \ref{fig: BSCchannel}, the time delay for a general $M\times N$ bistatic backscatter channel ($M$ transmitting antennas, $N$ receiving antennas) is given by
\begin{align}
    \mathbf{T}= \Delta\cdot \bm{1}_{M \times N}+ \mathbf{h}\oplus\mathbf{g}^\intercal,
    \label{trueTOAbi}
\end{align}
where $\Delta$ is the time delay of the passive tag, $\mathbf{h}\in\mathbb{R}^{M\times 1}$ and $\mathbf{g}^\intercal\in\mathbb{R}^{1\times N}$ denote the  delays from the transmitting antennas to the passive tag and those from the passive tag to the receiving antennas, respectively. Similarly, for the general $M \times M$ monostatic topology, where $M$ antennas are employed for both transmitting and receiving, the time delay is given by 
\begin{align}
    \mathbf{T}= \Delta\cdot \bm{1}_{M \times M}+ \mathbf{h}\oplus\mathbf{h}^\intercal,
    \label{trueTOAmo}
\end{align}
where $\mathbf{h}\in\mathbb{R}^{M\times 1}$ represents the channel delays from the transceiving antennas to the passive tag. Fig. \ref{fig: comBSCMIMO} illustrates the topological structure of the $2 \times 2$ MIMO backscatter channels. As we can see that the subchannels in the MIMO bacskcatter channels are topologically correlated in certain patterns, while the subchannels in the conventional MIMO channel are topologically independent.

We define $\mathbf{Y} \in \mathbb{R}^{LM\times N}$ as the difference between the pilot receiving time and the pilot transmitting time, i.e., the observations of channel delay, where $L$ is the number of pilots of each transmitter ($L\geq 1$). The system model in TOA estimation can be written as
\begin{align}
    \mathbf{Y}=\mathbf{X}\mathbf{T}+\mathbf{W}, 
    \label{orginalOBJ}
\end{align}
where the TOA measurement errors $\mathbf{W}\in \mathbb{R}^{LM\times N}$ are assumed to be zero-mean Gaussian variables with variances $\sigma^2$ \cite{gholami2012improved, nguyen2016optimal,coluccia2017hybrid}, the TOA localization error is  assumed to be Gaussian \cite{park2015closed,cheung2004least}, and this is equivalent to TOA error being Gaussian \cite{gholami2012improved, nguyen2016optimal,coluccia2017hybrid}. $\mathbf{X}\in \mathbb{R}^{LM\times M}$ represents the pilots from the transmitting antennas, where $L$ is the length of the pilot. Note that the optimal $\mathbf{X}$ is given by $\mathbf{X}^\intercal\mathbf{X}=L\mathbf{I}_{M}$\cite{hassibi2003much}, so in this paper we set $\mathbf{X}=\mathbf{I}_{M}\otimes\mathbf{1}_{L\times 1}$.
The TOA model in (\ref{orginalOBJ}) can be rewrite in the vector form as
\begin{align}
    \mathbf{y}=\mathbf{S}\mathbf{t}+\mathbf{w},
\end{align}
 where $\mathbf{y}=\text{vec}(\mathbf{Y})$, $\mathbf{t}=\text{vec}(\mathbf{T})$ and $\mathbf{w}=\text{vec}(\mathbf{W})$,  $\mathbf{S}=\mathbf{I}_{N}\otimes\mathbf{X}=\mathbf{I}_{MN}\otimes\mathbf{1}_{L\times 1}$ is the transmitting matrix  corresponds to $\mathbf{X}$. 
It's also not hard to check that $\mathbf{S}$ satisfied $\mathbf{S}^\intercal\mathbf{S}=L\mathbf{I}_{MN}$.

\subsection{The proposed TOA estimator}
The conventional TOA estimation is given by \cite{kay1993fundamentals}
\begin{align}
    &\mathop{\arg\min}_\mathbf{t}  \|\mathbf{y}-\mathbf{S}\mathbf{t}\|^2,
\end{align}
which can be solved by least square (LS), i.e.,
\begin{align}
    \hat{ \mathbf{t}}=\mathbf{S}^\dagger\mathbf{y}=(\mathbf{S}^\intercal\mathbf{S})^{-1}\mathbf{S}^\intercal\mathbf{y}.\label{TOAconvcloseform}
\end{align}
The above estimator is optimal for the conventional MIMO channel when the noises on the receivers are independent and identically distributed, and can also be employed for the MIMO backscatter channels.
However, as we can see from Fig. \ref{fig: comBSCMIMO}, the subchannels are topologically correlated in certain patterns in both bistatic and monostatic channels, and the subchannels are topologically independent in the conventional the MIMO channel. It is not hard to see from Fig. \ref{fig: comBSCMIMO} that the topological structure is in linear form, and for the $M\times N$ bistatic topology, we have
\begin{align}
\mathbf{A}{\mathbf{t}}=\mathbf{0}_{(M-1)(N-1)\times1},
\end{align}
where $\mathbf{A}\in\{1,0,-1\}^{(M-1)(N-1)\times MN}$ is the correlation matrix.  For example, it is not hard to see that when $M=N=2$, $\mathbf{A}=\begin{pmatrix}
1 & -1 & -1 & 1
\end{pmatrix}$,
and when $M=3$, $N=2$, 
$\mathbf{A}=\begin{pmatrix}
1 & -1 & 0 & -1 & 1 & 0 \\
0 & 1 & -1 & 0 & -1 & 1 \\
\end{pmatrix}.$
In the next subsection, we will show how to generate the topological matrix $\mathbf{A}$ for general $M$ and $N$.

Inspired by this, we  employ such topological correlation as the constraints to the time delay vector $\mathbf{t}$ and  propose the following estimation for the bistatic configuration,
\begin{align}
\mathcal{OP}_{b}: \ \  &\mathop{\arg\min}_{\mathbf{t}} \ \ \| \mathbf{y}-\mathbf{S}\mathbf{t}\|^2 \nonumber\\
\operatorname{s.t.}\quad &
 \mathbf{A}{\mathbf{t}}=\mathbf{0}_{(M-1)(N-1)\times1}.
 \label{TDOAbiConsObjectF}
\end{align}
By employing Lagrange multiplier, the closed form of the proposed estimator for bistatic configuration is given by  \cite{kay1993fundamentals}

\begin{align}
\widetilde{\mathbf{t}}=&(\mathbf{I}_{MN}- (\mathbf{S}^\intercal\mathbf{S})^{-1}\mathbf{A}^\intercal(\mathbf{A}(\mathbf{S}^\intercal\mathbf{S})^{-1}\mathbf{A}^\intercal)^{-1}\mathbf{A})\mathbf{S}^\dagger\mathbf{y} \nonumber \\
=&(\mathbf{I}_{MN}- \mathbf{A}^\intercal(\mathbf{A}\mathbf{A}^\intercal)^{-1}\mathbf{A})\mathbf{S}^\dagger\mathbf{y} \nonumber \\
=&\mathbf{B}\mathbf{S}^\dagger\mathbf{y}=\mathbf{B}\hat{\mathbf{t}},
\label{TOAbicloseformVec}
\end{align}

where $\mathbf{B}=\mathbf{I}_{MN}- \mathbf{A}^\intercal(\mathbf{A}\mathbf{A}^\intercal)^{-1}\mathbf{A}$, and $\hat{\mathbf{t}}$ is the LS estimator in \eqref{TOAconvcloseform}.
The matrix form is given by
\begin{align}\label{TOAbicloseform}
    \widetilde{\mathbf{T}}=\text{unvec}_{M\times N}(\widetilde{\mathbf{t}}).
\end{align}

For the monostatic topology, since the uplink and the downlink are identical, i.e., $\mathbf{h}=\mathbf{g}$, we can treat it as a special case of the bistatic with $\mathbf{T}^\intercal=\mathbf{T}$, and the estimation can be formulated as
\begin{align}
\mathcal{OP}_{m}: \ \ &\mathop{\arg\min}_{\mathbf{t}} \ \ \| \mathbf{y}-\mathbf{S}\mathbf{t}\|^2 \nonumber\\
\operatorname{s.t.}\quad & \mathbf{A}{\mathbf{t}}=\mathbf{0}_{(M-1)(M-1)\times1}, \nonumber\\
\quad &\mathbf{T}^\intercal=\mathbf{T}.
\label{TDOAmoConsObjectF}
\end{align}
Under the first constraint solely, the solution has the same form as that for the bistatic:
\begin{align}
   \bar{\mathbf{t}}=\mathbf{B}\hat{ \mathbf{t}}=(\mathbf{I}_{MM}- \mathbf{A}^\intercal(\mathbf{A}\mathbf{A}^\intercal)^{-1}\mathbf{A}) \hat{ \mathbf{t}},
    \label{eq:mofirstcons}
\end{align}
With the consideration of the second constraint,  $\mathcal{OP}_{m}$ can be rewritten as
\begin{align}
    \mathcal{OP}_{m}: \ \ &\mathop{\arg\min}_{\mathbf{T}} \ \ \| \mathbf{T}-\bar{\mathbf{T}}\|^2 \nonumber\\
\operatorname{s.t.}
\quad &\mathbf{T}^\intercal=\mathbf{T},
\end{align}
where $\bar{\mathbf{T}}=\text{unvec}_{M\times M}(\bar{\mathbf{t}})$. Since $\mathbf{T}^\intercal=\mathbf{T}$, we have
\begin{align}\label{eq: OPM2}
    \| \mathbf{T}-\bar{\mathbf{T}}\|^2&=\operatorname{trace}(\mathbf{T}^\intercal\mathbf{T}-\bar{\mathbf{T}}^\intercal\mathbf{T}-\mathbf{T}^\intercal\bar{\mathbf{T}}+\bar{\mathbf{T}}^\intercal\bar{\mathbf{T}}) \nonumber \\
    &=\operatorname{trace}(\mathbf{T}\mathbf{T}-\bar{\mathbf{T}}^\intercal\mathbf{T}-\mathbf{T}\bar{\mathbf{T}}+\bar{\mathbf{T}}^\intercal\bar{\mathbf{T}}),
\end{align}
By taking the derivative of \eqref{eq: OPM2} and setting to zero, we have
\begin{align}
    2\mathbf{T}-(\bar{\mathbf{T}}+\bar{\mathbf{T}}^\intercal)=\mathbf{0}_{M\times M},
\end{align}
and the solution of $\mathcal{OP}_{m}$ is
\begin{align}\label{TOAmocloseform}
    \widetilde{\mathbf{T}}=\frac{1}{2}(\bar{\mathbf{T}}+\bar{\mathbf{T}}^\intercal).
\end{align}
and the corresponding vector form is $\widetilde{\mathbf{t}}=\text{vec}(\widetilde{\mathbf{T}})$.

\subsection{Generate correlation matrix $\mathbf{A}$ for general $M$ and $N$}
Here we provide an approach to generate $\mathbf{A}$ for general $M$ and $N$. To generate $\mathbf{A}$, we first define $(M-1)(N-1)$ submatrices in $\mathbf{T}$, in the way as shown in Fig. \ref{fig: illuA}. Each submatrix has a size of $2 \times 2$, and the $p$-th submatrices can be expressed as
\begin{align}
    \mathbf{D}_p=\begin{pmatrix}
    {[\mathbf{T}]}_{p_x,p_y} &  {[\mathbf{T}]}_{p_x,p_y+1} \\ {[\mathbf{T}]}_{p_x+1,p_y} &  {[\mathbf{T}]}_{p_x+1,p_y+1}
    \end{pmatrix},
\end{align}
where $p_x=p-(M-1)(p_y-1)$, $p_y=\ceil{\frac{p}{M-1}}$. 
Based on the topological structure of the bistatic channel, it is not hard to check that $\mathbf{D}_p$ satisfies
\begin{align}\label{eq: submatrix constraint}
\begin{pmatrix}1 &-1& -1& 1 \end{pmatrix}\text{vec}(\mathbf{D}_p)=0,
\end{align}
for all $p \in\{1,\cdots, (M-1)(N-1)\}$. This constraint is equivalent to
\begin{align}
\mathbf{a}_p\mathbf{t}=0,
\end{align}
where $\mathbf{a}_p$ is the $p$-th row of $\mathbf{A}$.
It is not hard to see that except the four non-zero elements $1$, $-1$, $-1$, $1$, all other elements of $\mathbf{a}_p$ are zeros.
So the key to generate $\mathbf{A}$ is to find the indices of $1$, $-1$, $-1$, $1$ for each $\mathbf{a}_p$. Based on the topological structure of the MIMO backscatter channels, it is not hard to verify that $[\mathbf{A}]_{p,q}=1$, $[\mathbf{A}]_{p,q+1}=-1$, $[\mathbf{A}]_{p,q+M}=-1$ and $[\mathbf{A}]_{p,q+M+1}=1$, where $q=p_x+M(p_y-1)=p+\lceil\frac{p}{M-1}\rceil-1$.
The corresponding routine to generate the constraint matrix $\mathbf{A}$ is summarized in Algorithm \ref{algo1}, where $\mathbf{A}\in\{1,0,-1\}^{(M-1)(N-1)\times MN}$.  Clearly, the rows of $\mathbf{A}$, which represent the submatrices in Fig. \ref{fig: illuA} are linearly independent, therefore the rank of $\mathbf{A}$ is $(M-1)(N-1)$. 
\begin{figure}
\centering

\begin{tikzpicture}[scale=0.9]
\tikzstyle{every node}=[font=\small,scale=0.9]

\node at (-4.6,0) {$\mathbf{T}=$};

\draw[red, fill=red!5, thick] (-3.7,1.9) rectangle (-1.1,0.7);
\draw[violet, fill=violet!5, thick] (-3.6,1.2) rectangle (-1,0.1);
\draw[blue, fill=blue!5, thick] (-3.8,-0.7) rectangle (-0.8,-1.8);
\draw[green, fill=green!5, thick] (-2.1,1.8) rectangle (0.3,0.6);
\draw[orange, fill=orange!5, thick] (0.3,-0.7) rectangle (3.8,-1.8);
\draw[red, thick] (-3.7,1.9) rectangle (-1.1,0.7);
\draw[violet, thick] (-3.6,1.2) rectangle (-1,0.1);

\draw[red, thick,->] (-2.5,1.9)--(-2.5,2.2);
\node [red] at (-2.5,2.4) {$\mathbf{D}_1$};
\draw[violet, thick,->] (-2.4,0.1)--(-2.4,-0.2);
\node [violet] at (-2.4,-0.4) {$\mathbf{D}_2$};
\draw[blue, thick,->] (-2.5,-1.8)--(-2.5,-2.1);
\node [blue] at (-2.4,-2.3) {$\mathbf{D}_{M-1}$};
\draw[green, thick,->] (-0.3,1.8)--(-0.3,2.1);
\node [green] at (-0.3,2.3) {$\mathbf{D}_{M}$};
\draw[orange, thick,->] (2,-1.8)--(2,-2.1);
\node [orange] at (2,-2.3) {$\mathbf{D}_{(M-1)(N-1)}$};

\matrix [matrix of math nodes,left delimiter=(,right delimiter=)]
{
{[\mathbf{T}]_{1,1}} & {[\mathbf{T}]_{1,2}} & {[\mathbf{T}]_{1,3}} & \dots & {[\mathbf{T}]_{1,N}} \\
{[\mathbf{T}]_{2,1}} & {[\mathbf{T}]_{2,2}} & {[\mathbf{T}]_{2,3}} & \dots & {[\mathbf{T}]_{2,N}} \\
{[\mathbf{T}]_{3,1}} & {[\mathbf{T}]_{3,2}} & {[\mathbf{T}]_{3,3}} & \dots & {[\mathbf{T}]_{3,N}} \\
\vdots &  & \ddots &  & \vdots \\
{[\mathbf{T}]_{M-1,1}} & {[\mathbf{T}]_{M-1,2}} &  \dots & {[\mathbf{T}]_{M-1,N-1}} & {[\mathbf{T}]_{M-1,N}} \\
{[\mathbf{T}]_{M,1}} & {[\mathbf{T}]_{M,2}} &  \dots & {[\mathbf{T}]_{M,N-1}} & {[\mathbf{T}]_{M,N}} \\
};
\end{tikzpicture}
\caption{Submatrices in $\mathbf{T}$.}
\label{fig: illuA}
\end{figure}
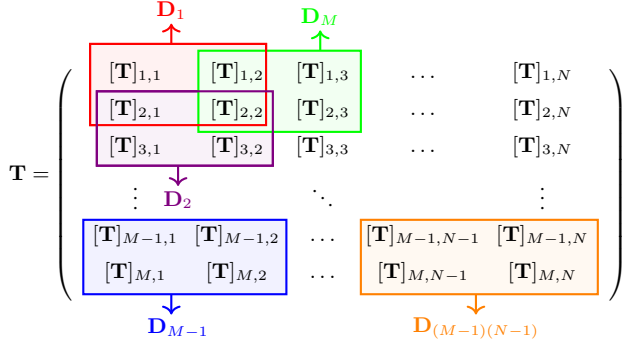

Now we show that the correlation matrix $\mathbf{A}$ generated by Algorithm \ref{algo1} contains all the topological information of MIMO backscatter TOA channels, i.e.,  $\mathbf{A}\mathbf{t}=\mathbf{0}_{(M-1)(N-1)\times1}$ is equivalent to $\mathbf{T}=\Delta+ \mathbf{h}\oplus\mathbf{g}^\intercal$.
\begin{Lemma}
For the correlation matrix $\mathbf{A}$ generated by Algorithm \ref{algo1}, $\mathbf{A}\mathbf{t}=\mathbf{0}_{(M-1)(N-1)\times1}$ and $ \mathbf{T}=\Delta+ \mathbf{h}\oplus\mathbf{g}^\intercal$ are equivalent.
\label{lemma: equivalence}
\end{Lemma}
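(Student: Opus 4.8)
The plan is to prove the two implications separately: the direction $\mathbf{T}=\Delta\mathbf{1}_{M\times N}+\mathbf{h}\oplus\mathbf{g}^\intercal\Rightarrow\mathbf{A}\mathbf{t}=\mathbf{0}$ is a one-line substitution, and the converse is the real content. First I would unpack what $\mathbf{A}\mathbf{t}=\mathbf{0}$ says entrywise. By the construction in Algorithm~\ref{algo1}, as $p$ runs over $\{1,\dots,(M-1)(N-1)\}$ the pair $(p_x,p_y)$ with $p_x=p-(M-1)(p_y-1)$ and $p_y=\lceil p/(M-1)\rceil$ runs exactly once over $\{1,\dots,M-1\}\times\{1,\dots,N-1\}$ (this is just column-major enumeration of an $(M-1)\times(N-1)$ grid), and row $\mathbf{a}_p$ of $\mathbf{A}$ encodes $\begin{pmatrix}1&-1&-1&1\end{pmatrix}\text{vec}(\mathbf{D}_p)$. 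Hence $\mathbf{A}\mathbf{t}=\mathbf{0}$ is equivalent to the system of elementary ``mixed-difference'' equations
\begin{equation}
[\mathbf{T}]_{i,j}-[\mathbf{T}]_{i+1,j}-[\mathbf{T}]_{i,j+1}+[\mathbf{T}]_{i+1,j+1}=0,\qquad 1\le i\le M-1,\ 1\le j\le N-1.
\tag{$\star$}
\end{equation}
For the forward direction, substituting $[\mathbf{T}]_{i,j}=\Delta+h_i+g_j$ into the left-hand side of $(\star)$ makes $\Delta$, $h_i$, $h_{i+1}$, $g_j$, $g_{j+1}$ cancel pairwise, so $(\star)$ holds and therefore $\mathbf{A}\mathbf{t}=\mathbf{0}$. (The degenerate cases $M=1$ or $N=1$ are vacuous: $\mathbf{A}$ is empty and a single row or column of $\mathbf{T}$ trivially admits the required form.)

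For the converse I would assume $(\star)$ and rewrite it as $[\mathbf{T}]_{i,j}-[\mathbf{T}]_{i+1,j}=[\mathbf{T}]_{i,j+1}-[\mathbf{T}]_{i+1,j+1}$, which says that the row difference $d_i(j):=[\mathbf{T}]_{i,j}-[\mathbf{T}]_{i+1,j}$ does not depend on $j$; thus $d_i(j)=d_i(1)$ for every $j$. Telescoping over rows then gives, for all $i,j$, $[\mathbf{T}]_{1,j}-[\mathbf{T}]_{i,j}=\sum_{k=1}^{i-1}d_k(j)=\sum_{k=1}^{i-1}d_k(1)=[\mathbf{T}]_{1,1}-[\mathbf{T}]_{i,1}$, i.e. $[\mathbf{T}]_{i,j}=[\mathbf{T}]_{i,1}+[\mathbf{T}]_{1,j}-[\mathbf{T}]_{1,1}$. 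Setting $\Delta:=[\mathbf{T}]_{1,1}$, $h_i:=[\mathbf{T}]_{i,1}-[\mathbf{T}]_{1,1}$ (so $h_1=0$) and $g_j:=[\mathbf{T}]_{1,j}-[\mathbf{T}]_{1,1}$ (so $g_1=0$) yields $[\mathbf{T}]_{i,j}=\Delta+h_i+g_j$, that is $\mathbf{T}=\Delta\mathbf{1}_{M\times N}+\mathbf{h}\oplus\mathbf{g}^\intercal$, which is the claim.

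I do not expect a genuine obstacle; the ``hard'' part is purely bookkeeping, in two spots. The first is confirming that Algorithm~\ref{algo1} enumerates every adjacent $2\times2$ block of $\mathbf{T}$ exactly once, so that $(\star)$ is the full content of $\mathbf{A}\mathbf{t}=\mathbf{0}$ and nothing is missed at the boundary; this is where the index formulas for $p_x$, $p_y$, $q$ must be checked carefully. The second is making the telescoping over rows watertight (equivalently, a short induction on $i+j$ using a single instance of $(\star)$ at each step), together with the observation that the decomposition $\Delta\mathbf{1}_{M\times N}+\mathbf{h}\oplus\mathbf{g}^\intercal$ is unique only up to the shifts $(\Delta,h_i,g_j)\mapsto(\Delta+c',\,h_i-c'+c,\,g_j-c)$, which is why the normalization $h_1=g_1=0$ is harmless. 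Everything else reduces to the pairwise cancellation already used for the forward direction.
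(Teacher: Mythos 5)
Your proof is correct and follows essentially the same route as the paper's: both reduce $\mathbf{A}\mathbf{t}=\mathbf{0}$ to the adjacent $2\times 2$ mixed-difference equations, deduce $[\mathbf{T}]_{i,j}=[\mathbf{T}]_{i,1}+[\mathbf{T}]_{1,j}-[\mathbf{T}]_{1,1}$, and read off the decomposition from the first row and column of $\mathbf{T}$. The only differences are that you spell out the telescoping step the paper dismisses as ``not hard to see'' and you normalize $h_1=g_1=0$ instead of re-expressing the result through the physical $\Delta$, $h_1$, $g_1$; both are legitimate since, as you note, the decomposition is unique only up to the shift family.
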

\begin{proof}
It is easy to see that $\mathbf{T}=\Delta+ \mathbf{h}\oplus\mathbf{g}^\intercal \Rightarrow \mathbf{A}\mathbf{t}=\mathbf{0}_{(M-1)(N-1)\times1}$. So we only need to prove $\mathbf{T}=\Delta+ \mathbf{h}\oplus\mathbf{g}^\intercal \Leftarrow \mathbf{A}\mathbf{t}=\mathbf{0}_{(M-1)(N-1)\times1}$. From \eqref{eq: submatrix constraint} it is not hard to see that
\begin{align}\label{lemma: equivalence equation}
    [\mathbf{T}]_{1,1}+[\mathbf{T}]_{i,j} =[\mathbf{T}]_{i,1}+[\mathbf{T}]_{1,j},
\end{align}
where $i\in\{1,2,\dots,M\}$ and $j\in\{1,2,\dots,N\}$. According to \eqref{lemma: equivalence equation}, the delay matrix $\mathbf{T}$ can be written as 
\begin{align}
    \mathbf{T}=\underline{\mathbf{t}}_1\oplus(\mathbf{t}_1-[\mathbf{T}]_{1,1}\cdot \bm{1}_{M \times 1}),
\end{align}
where $\mathbf{t}_1=([\mathbf{T}]_{1,1}, [\mathbf{T}]_{1,2}, \dots, [\mathbf{T}]_{1,N})$ is the first row of $\mathbf{T}$ and $\underline{\mathbf{t}}_1=([\mathbf{T}]_{1,1}, [\mathbf{T}]_{2,1}, \dots, [\mathbf{T}]_{M,1})^\intercal$ is the first column of $\mathbf{T}$. Then, we have
\begin{align}
    \mathbf{T}=&\Delta\cdot \bm{1}_{M \times N}+(\underline{\mathbf{t}}_1-(\Delta+g_1)\cdot \bm{1}_{M \times 1})
    \nonumber \\
    &\oplus(\mathbf{t}_1+(g_1-[\mathbf{T}]_{1,1})\cdot \bm{1}_{M \times 1}) \nonumber \\
    =&\Delta\cdot \bm{1}_{M \times N}+(\underline{\mathbf{t}}_1-(\Delta+g_1)\cdot \bm{1}_{M \times 1})
    \nonumber \\
    &\oplus(\mathbf{t}_1-(\Delta+h_1)\cdot \bm{1}_{M \times 1}) \nonumber \\
    =&\Delta\cdot \bm{1}_{M \times N}+ \mathbf{h}\oplus\mathbf{g}^\intercal.
\end{align}
Therefore, $\mathbf{T}=\Delta+ \mathbf{h}\oplus\mathbf{g}^\intercal \Leftarrow \mathbf{A}\mathbf{t}=\mathbf{0}_{(M-1)(N-1)\times1}$ also holds.
\end{proof}


\begin{algorithm}
\caption{- Generating Correlation Matrix $\mathbf{A}$}
\begin{algorithmic}[1]
\label{algo1} \REQUIRE  $M$, $N$ (for monostatic $M=N$).
\ENSURE The constraint matrix $\mathbf{A}$.
\STATE Initialize $\mathbf{A}=\mathbf{0}_{(M-1)(N-1)\times MN}.$
\FOR{$p=1$ to $(M-1)(N-1)$} 
\STATE $q=p+\lceil\frac{p}{M-1}\rceil-1$;
\STATE ${[\mathbf{A}]}_{p, q}=1$;
\STATE ${[\mathbf{A}]}_{p, q+1}=-1$;
\STATE ${[\mathbf{A}]}_{p, q+M}=-1$;
\STATE ${[\mathbf{A}]}_{p, q+M+1}=1$;
\ENDFOR
\end{algorithmic}
\end{algorithm}

\subsection{The weighting matrix $\mathbf{B}$}
With $\mathbf{A}$ being generated, now we study the matrix $\mathbf{B}=\mathbf{I}_{MN}-\mathbf{A}^\intercal(\mathbf{A}\mathbf{A}^\intercal)^{-1}\mathbf{A}$ for general $M$ and $N$. It is clear that the finally estimator is the linear transformation of the original LS estimator and $\mathbf{B}$ can be treated as a weighting matrix, then we provide and prove the following results for $\mathbf{B}$ via Lemma \ref{lemma_1} and Lemma \ref{Lemma_z}. 
\begin{Lemma}
For $\mathbf{B}=\mathbf{I}_{MN}-\mathbf{A}^\intercal(\mathbf{A}\mathbf{A}^\intercal)^{-1}\mathbf{A}$, we have

\noindent
\textit{(1)} $\mathbf{B}\mathbf{1}_{MN\times1}=\mathbf{1}_{MN\times1}$.

\noindent
\textit{(2)} $\mathbf{B}\mathbf{A}^\intercal=\mathbf{0}_{MN \times (M-1)(N-1)}$.

\noindent
\textit{(3)} $\mathbf{B}$ is idempotent, i.e., $\mathbf{B}^2=\mathbf{B}$ 

\noindent
\textit{(4)} $\mathbf{B}$ is symmetric, i.e., $\mathbf{B}^\intercal=\mathbf{B}$.
\label{lemma_1}
\end{Lemma}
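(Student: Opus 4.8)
The plan is to establish the four properties essentially in the order stated, since the later ones lean on the structure exposed by the earlier ones. The matrix $\mathbf{B}=\mathbf{I}_{MN}-\mathbf{A}^\intercal(\mathbf{A}\mathbf{A}^\intercal)^{-1}\mathbf{A}$ is the orthogonal projector onto the null space of $\mathbf{A}$; recognizing this at the outset makes (3) and (4) almost immediate, but I will still want to verify the hypotheses that make that description valid — namely that $\mathbf{A}\mathbf{A}^\intercal$ is invertible, which follows because $\mathbf{A}$ has full row rank $(M-1)(N-1)$ as noted just above in the discussion of Algorithm \ref{algo1}.

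First I would prove (2): $\mathbf{B}\mathbf{A}^\intercal = \mathbf{A}^\intercal - \mathbf{A}^\intercal(\mathbf{A}\mathbf{A}^\intercal)^{-1}(\mathbf{A}\mathbf{A}^\intercal) = \mathbf{A}^\intercal - \mathbf{A}^\intercal = \mathbf{0}$, a one-line cancellation using invertibility of $\mathbf{A}\mathbf{A}^\intercal$. Next, for (1), I would use the fact that every row $\mathbf{a}_p$ of $\mathbf{A}$ has entries $1,-1,-1,1$ summing to zero, so $\mathbf{A}\mathbf{1}_{MN\times1}=\mathbf{0}_{(M-1)(N-1)\times1}$; hence $\mathbf{B}\mathbf{1}_{MN\times1} = \mathbf{1}_{MN\times1} - \mathbf{A}^\intercal(\mathbf{A}\mathbf{A}^\intercal)^{-1}(\mathbf{A}\mathbf{1}_{MN\times1}) = \mathbf{1}_{MN\times1}$. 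For (4), symmetry is read off directly: $\mathbf{B}^\intercal = \mathbf{I}_{MN}^\intercal - \mathbf{A}^\intercal\big((\mathbf{A}\mathbf{A}^\intercal)^{-1}\big)^\intercal(\mathbf{A}^\intercal)^\intercal = \mathbf{I}_{MN} - \mathbf{A}^\intercal(\mathbf{A}\mathbf{A}^\intercal)^{-1}\mathbf{A} = \mathbf{B}$, using that $(\mathbf{A}\mathbf{A}^\intercal)^{-1}$ is symmetric because $\mathbf{A}\mathbf{A}^\intercal$ is. Finally, for (3), I would expand $\mathbf{B}^2 = \big(\mathbf{I}_{MN}-\mathbf{A}^\intercal(\mathbf{A}\mathbf{A}^\intercal)^{-1}\mathbf{A}\big)^2$ and watch the cross terms: the middle term $\mathbf{A}^\intercal(\mathbf{A}\mathbf{A}^\intercal)^{-1}\mathbf{A}\mathbf{A}^\intercal(\mathbf{A}\mathbf{A}^\intercal)^{-1}\mathbf{A}$ collapses to $\mathbf{A}^\intercal(\mathbf{A}\mathbf{A}^\intercal)^{-1}\mathbf{A}$, leaving $\mathbf{B}^2 = \mathbf{I}_{MN} - 2\mathbf{A}^\intercal(\mathbf{A}\mathbf{A}^\intercal)^{-1}\mathbf{A} + \mathbf{A}^\intercal(\mathbf{A}\mathbf{A}^\intercal)^{-1}\mathbf{A} = \mathbf{B}$. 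Alternatively, (3) also follows formally from (2) by writing $\mathbf{B}^2 = \mathbf{B}\mathbf{I}_{MN} - \mathbf{B}\mathbf{A}^\intercal(\mathbf{A}\mathbf{A}^\intercal)^{-1}\mathbf{A} = \mathbf{B} - \mathbf{0} = \mathbf{B}$, which is cleaner and reuses work already done.

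None of the four steps is genuinely hard; this is a routine verification once the invertibility of $\mathbf{A}\mathbf{A}^\intercal$ is in hand. The only point that deserves a sentence of care is precisely that invertibility — I would cite the full-row-rank observation for $\mathbf{A}$ established via Algorithm \ref{algo1} (the rows correspond to the linearly independent $2\times2$ submatrix constraints of Fig. \ref{fig: illuA}), which guarantees $\mathbf{A}\mathbf{A}^\intercal\in\mathbb{R}^{(M-1)(N-1)\times(M-1)(N-1)}$ is positive definite and hence invertible, so that $\mathbf{B}$ is well defined. I would also remark, as a unifying comment, that (1) says $\mathbf{1}_{MN\times1}$ lies in the range of $\mathbf{B}$ (consistent with it being in the null space of $\mathbf{A}$), while (2), (3), (4) together simply say $\mathbf{B}$ is the symmetric idempotent projector onto $\ker\mathbf{A}$; this viewpoint is what makes the MSE computations in the subsequent sections go through.
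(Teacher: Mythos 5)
Your proposal is correct and follows essentially the same route as the paper's proof: direct algebraic verification of each identity, with the cancellation $\mathbf{A}\mathbf{A}^\intercal(\mathbf{A}\mathbf{A}^\intercal)^{-1}=\mathbf{I}$ doing the work in (2) and (3), $\mathbf{A}\mathbf{1}_{MN\times 1}=\mathbf{0}$ in (1), and symmetry read off by transposition in (4). Your added remarks on the invertibility of $\mathbf{A}\mathbf{A}^\intercal$ (via the full row rank of $\mathbf{A}$) and the projector interpretation are sound refinements that the paper leaves implicit, but they do not change the argument.
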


\begin{proof} Since $\mathbf{B}=\mathbf{I}_{MN}-\mathbf{A}^\intercal(\mathbf{A}\mathbf{A}^\intercal)^{-1}\mathbf{A}$, we have
\begin{align}
    \mathbf{B}\mathbf{1}_{MN\times1}=&\mathbf{I}_{MN}\mathbf{1}_{MN\times1}-\mathbf{A}^\intercal(\mathbf{A}\mathbf{A}^\intercal)^{-1}\mathbf{A}\mathbf{1}_{MN\times1} \nonumber \\
    =&\mathbf{I}_{MN}\mathbf{1}_{MN\times1}=\mathbf{1}_{MN\times1}, \nonumber \\
    \mathbf{B}\mathbf{A}^\intercal=&\mathbf{I}_{MN}\mathbf{A}^\intercal-\mathbf{A}^\intercal(\mathbf{A}\mathbf{A}^\intercal)^{-1}\mathbf{A}\mathbf{A}^\intercal \nonumber \\
   =&\mathbf{A}^\intercal-\mathbf{A}^\intercal=\mathbf{0}_{MN \times (M-1)(N-1)}, \nonumber \\
   \mathbf{B}^2=&\mathbf{I}_{MN}-2\mathbf{A}^\intercal(\mathbf{A}\mathbf{A}^\intercal)^{-1}\mathbf{A}\nonumber \\
    &+\mathbf{A}^\intercal(\mathbf{A}\mathbf{A}^\intercal)^{-1}\mathbf{A}\mathbf{A}^\intercal(\mathbf{A}\mathbf{A}^\intercal)^{-1}\mathbf{A} \nonumber \\
    =& \mathbf{I}_{MN}-\mathbf{A}^\intercal(\mathbf{A}\mathbf{A}^\intercal)^{-1}\mathbf{A}=\mathbf{B}, \nonumber \\
    \mathbf{B}^\intercal =&(\mathbf{I}_{MN}-\mathbf{A}^\intercal(\mathbf{A}\mathbf{A}^\intercal)^{-1}\mathbf{A})^\intercal=\mathbf{B}. 
\end{align}
Therefore Lemma \ref{lemma_1} holds.
\end{proof}
Based on the symmetric property of the channel, for any given subchannel $h_i+g_j$, we can classify the subchannels of the entire MIMO backscatter channel into four types:

\noindent
$\bullet$ Type 1: the subchannel containing both $h_i$ and $g_j$, i.e., the subchannel $h_i+g_j$ itself;

\noindent
$\bullet$ Type 2: the subchannels containing $h_i$ but not  $g_j$;

\noindent
$\bullet$ Type 3: the subchannels containing  $g_j$ but not  $h_i$;

\noindent
$\bullet$ Type 4: the subchannels containing neither $h_i$ nor $g_j$.

Note that $\mathbf{B}$ can be treated as the weighting matrix of the original LS estimator, therefore the elements of $\mathbf{B}$ can also be classified into four types as the above. For any subchannel $h_i+g_j$, we define the weights for $h_i+g_j$ from the Type 1, 2, 3 and 4 subchannels as $\alpha_1$, $\alpha_2$, $\alpha_3$, and $\alpha_4$, respectively.
For example, when $M=N=2$, $\mathbf{B}$ is in the following form 
\begin{align}
\mathbf{B}=\begin{pmatrix}    \alpha_1 & \alpha_3 & \alpha_2 & \alpha_4 \\  \alpha_3 & \alpha_1  & \alpha_4 & \alpha_2 \\ \alpha_2 &  \alpha_4  & \alpha_1  & \alpha_3 \\      \alpha_4   & \alpha_2 &  \alpha_3 &\alpha_1      \end{pmatrix},
\end{align}
and when $M=2$, $N=3$,
\begin{align}
\mathbf{B}=\begin{pmatrix}    \alpha_1& \alpha_3 & \alpha_2 & \alpha_4 & \alpha_2 & \alpha_4 \\  \alpha_3 & \alpha_1 & \alpha_4 & \alpha_2 & \alpha_4 & \alpha_2 \\ \alpha_2 &  \alpha_4  & \alpha_1 & \alpha_3 &\alpha_2 & \alpha_4\\      \alpha_4   & \alpha_2 &  \alpha_3 &\alpha_1&\alpha_4   & \alpha_2
\\      \alpha_2   & \alpha_4 &  \alpha_2 &\alpha_4 &\alpha_1& \alpha_3
\\      \alpha_4   & \alpha_2 &  \alpha_4 &\alpha_2 &\alpha_3 & \alpha_1\end{pmatrix}.
\end{align}
Now we prove the following Lemma to characterize the values of that $\alpha_1$, $\alpha_2$, $\alpha_3$, and $\alpha_4$.

\begin{Lemma}
For the bistatic configuration, we have $\alpha_1=\frac{M+N-1}{MN}$,  $\alpha_2=\frac{M-1}{MN}$, $\alpha_3=\frac{N-1}{MN}$, $\alpha_4=-\frac{1}{MN}$, for any arbitrary $M$ and $N$.  For the monostatic configuration ($M=N$), we have $\alpha_1=\frac{2M-1}{M^2}$, $\alpha_2=\alpha_3=\frac{M-1}{M^2}$ and $\alpha_4=-\frac{1}{M^2}$ for any arbitrary $M$. 
\label{Lemma_z}
\end{Lemma}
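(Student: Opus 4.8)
The plan is to recognize $\mathbf{B}$ as an orthogonal projector and then to invoke the standard closed form of that projector. First I would note, from Lemma~\ref{lemma_1}, that $\mathbf{B}$ is symmetric and idempotent and that $\mathbf{B}\mathbf{A}^\intercal=\mathbf{0}$; transposing the last identity gives $\mathbf{A}\mathbf{B}=\mathbf{0}$, so $\operatorname{range}(\mathbf{B})\subseteq\ker(\mathbf{A})$. By Lemma~\ref{lemma: equivalence}, $\ker(\mathbf{A})=\mathcal{V}:=\{\text{vec}(\mathbf{a}\oplus\mathbf{b}^\intercal):\mathbf{a}\in\mathbb{R}^{M\times1},\ \mathbf{b}\in\mathbb{R}^{N\times1}\}$ (the additive scalar $\Delta$ being absorbed into $\mathbf{a}$ or $\mathbf{b}$). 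Conversely, for $\mathbf{v}\in\mathcal{V}$ we have $\mathbf{A}\mathbf{v}=\mathbf{0}$, hence $\mathbf{A}^\intercal(\mathbf{A}\mathbf{A}^\intercal)^{-1}\mathbf{A}\mathbf{v}=\mathbf{0}$ and $\mathbf{B}\mathbf{v}=\mathbf{v}$, so $\mathcal{V}\subseteq\operatorname{range}(\mathbf{B})$. Therefore $\operatorname{range}(\mathbf{B})=\mathcal{V}$, and since a symmetric idempotent matrix is the orthogonal projector onto its range, $\mathbf{B}$ is \emph{the} orthogonal projector onto $\mathcal{V}$.

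Second, I would write that projector in the classical two-way (ANOVA-type) form. Let $\mathbf{R}=\tfrac1N\mathbf{1}_{N\times N}\otimes\mathbf{I}_M$, $\mathbf{C}=\mathbf{I}_N\otimes\tfrac1M\mathbf{1}_{M\times M}$, and $\mathbf{G}=\tfrac1{MN}\mathbf{1}_{MN\times MN}$ be the $MN\times MN$ matrices that, acting on $\text{vec}(\mathbf{T})$, perform row averaging, column averaging, and grand averaging, respectively. These are symmetric and satisfy $\mathbf{R}^2=\mathbf{R}$, $\mathbf{C}^2=\mathbf{C}$, $\mathbf{R}\mathbf{C}=\mathbf{C}\mathbf{R}=\mathbf{G}=\mathbf{G}^2$, and $\mathbf{R}\mathbf{G}=\mathbf{G}\mathbf{R}=\mathbf{C}\mathbf{G}=\mathbf{G}\mathbf{C}=\mathbf{G}$, from which a short computation gives $(\mathbf{R}+\mathbf{C}-\mathbf{G})^2=\mathbf{R}+\mathbf{C}-\mathbf{G}$. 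Moreover $\mathbf{R}+\mathbf{C}-\mathbf{G}$ is symmetric, its range lies in $\mathcal{V}$, and it fixes $\mathcal{V}$ pointwise (row averaging fixes $\mathbf{a}\oplus\mathbf{0}^\intercal$ while column and grand averaging both map it to the same constant matrix, and symmetrically for $\mathbf{0}\oplus\mathbf{b}^\intercal$). Hence $\mathbf{R}+\mathbf{C}-\mathbf{G}$ is likewise the orthogonal projector onto $\mathcal{V}$, so $\mathbf{B}=\mathbf{R}+\mathbf{C}-\mathbf{G}$. Reading off the entry of $\mathbf{B}$ indexed by positions $(i,j)$ and $(k,l)$ of $\mathbf{T}$, $[\mathbf{B}]_{(i,j),(k,l)}=\tfrac{\delta_{ik}}{N}+\tfrac{\delta_{jl}}{M}-\tfrac1{MN}$, and the four subchannel types give $\alpha_1=\tfrac1N+\tfrac1M-\tfrac1{MN}=\tfrac{M+N-1}{MN}$, $\alpha_2=\tfrac1N-\tfrac1{MN}=\tfrac{M-1}{MN}$, $\alpha_3=\tfrac1M-\tfrac1{MN}=\tfrac{N-1}{MN}$, and $\alpha_4=-\tfrac1{MN}$.

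For the monostatic configuration the correlation matrix is generated by Algorithm~\ref{algo1} with $N=M$, so the matrix $\mathbf{B}$ in \eqref{eq:mofirstcons} is exactly the bistatic $\mathbf{B}$ with $N$ replaced by $M$, and the subsequent symmetrization in \eqref{TOAmocloseform} leaves this matrix unchanged. Substituting $N=M$ into the formulas above immediately yields $\alpha_1=\tfrac{2M-1}{M^2}$, $\alpha_4=-\tfrac1{M^2}$, and $\alpha_2=\alpha_3=\tfrac{M-1}{M^2}$ since $\tfrac{M-1}{MN}$ and $\tfrac{N-1}{MN}$ coincide when $N=M$, which is the claim.

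I do not expect a genuine obstacle here: the argument is essentially a recognition step, and almost all of its content lies in establishing $\mathbf{B}=\mathbf{R}+\mathbf{C}-\mathbf{G}$ (equivalently, that $\mathbf{B}$ is the orthogonal projector onto $\mathcal{V}$); once that is in hand the four weights drop out by inspection. The alternative, purely entrywise route — positing the four-type form of $\mathbf{B}$ and solving for $\alpha_1,\dots,\alpha_4$ — is less clean, because the linear relations available from Lemma~\ref{lemma_1} (each row of $\mathbf{B}$ summing to $1$, and $\mathbf{B}\mathbf{A}^\intercal=\mathbf{0}$) determine only two independent combinations of the $\alpha_i$, so one must additionally exploit $\mathbf{B}^2=\mathbf{B}$ and carefully track how the four subchannel types compose under matrix multiplication. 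That bookkeeping is the part I would most want to avoid, which is why I would take the projector route.
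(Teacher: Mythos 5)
Your proof is correct, but it takes a genuinely different route from the paper's. The paper does not compute $\mathbf{B}$ explicitly: it takes the four-type structure of $\mathbf{B}$ as given (asserted from the symmetry of the channel just before the lemma), obtains $\alpha_1$ from $\operatorname{trace}(\mathbf{B})=MN-\operatorname{rank}(\mathbf{A})=M+N-1$ using the idempotency of $\mathbf{A}^\dagger\mathbf{A}$, and then extracts $\alpha_2,\alpha_3,\alpha_4$ by counting how many entries of each type appear in a row and solving the three scalar equations that follow from $\mathbf{B}\mathbf{1}=\mathbf{1}$, the diagonal of $\mathbf{B}^2=\mathbf{B}$, and $\mathbf{B}\mathbf{A}^\intercal=\mathbf{0}$. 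Your projector argument buys more: by identifying $\operatorname{range}(\mathbf{B})=\ker(\mathbf{A})$ with the space of vectorized $\mathbf{a}\oplus\mathbf{b}^\intercal$ matrices via Lemma~\ref{lemma: equivalence} and then matching $\mathbf{B}$ with the two-way averaging projector $\mathbf{R}+\mathbf{C}-\mathbf{G}$, you obtain the full closed form $[\mathbf{B}]_{(i,j),(k,l)}=\delta_{ik}/N+\delta_{jl}/M-1/MN$, which \emph{proves} the four-type structure rather than assuming it and sidesteps the fact that the paper's system (two linear equations plus one quadratic) is not manifestly uniquely solvable. The paper's route is more elementary in that it needs only Lemma~\ref{lemma_1} and a counting argument, at the cost of those two unaddressed points. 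One small inaccuracy in your monostatic paragraph: the symmetrization step \eqref{TOAmocloseform} does \emph{not} leave the overall weighting matrix equal to $\mathbf{B}$ (in vectorized form it replaces $\mathbf{B}$ by $\tfrac12(\mathbf{I}+\mathbf{K})\mathbf{B}$ with $\mathbf{K}$ the commutation matrix, and $\mathbf{B}\hat{\mathbf{t}}$ need not correspond to a symmetric matrix); this does not affect Lemma~\ref{Lemma_z}, which concerns only the entries of $\mathbf{B}$ itself with $N=M$, but the remark should be deleted since the symmetrization is accounted for separately in Theorem~\ref{themo}.
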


\begin{proof}
First, we consider the bistatic configuration and show that $\alpha_1=\frac{M+N-1}{MN}$. Note that the diagonal elements of $\mathbf{B}$ are all $\alpha_1$, $\alpha_1=\frac{\operatorname{trace}(\mathbf{B})}{MN}$ and we only need to find the trace of $\mathbf{B}$. Since $\mathbf{A}^\intercal(\mathbf{A}\mathbf{A}^\intercal)^{-1}\mathbf{A}=\mathbf{A}^\dagger \mathbf{A}$ is idempotent matrix, $\operatorname{trace}(\mathbf{A}^\dagger \mathbf{A})=\operatorname{rank}(\mathbf{A}^\dagger \mathbf{A})=\operatorname{rank}(\mathbf{A})=(M-1)(N-1)$, we have
\begin{align}
    \operatorname{trace}(\mathbf{B})&=\operatorname{trace}(\mathbf{I}_{MN}-\mathbf{A}^\dagger \mathbf{A})
    \nonumber \\
    &=\operatorname{trace}(\mathbf{I}_{MN})-\operatorname{trace}(\mathbf{A}^\dagger \mathbf{A}) \nonumber\\
    &=MN-(M-1)(N-1)\nonumber \\
    &=M+N-1.
\end{align}
Therefore,
\begin{align}
    \alpha_1=\frac{\operatorname{trace}(\mathbf{B})}{MN}=\frac{M+N-1}{MN}.
\end{align}
Now we consider $\alpha_2$, $\alpha_3$ and $\alpha_4$. It's not hard to see that the number of elements of $\mathbf{B}$ in Type 1, 2, 3, 4 are $1$, $N-1$, $M-1$ and $(M-1)(N-1)$, respectively, and according to Lemma \ref{lemma_1}, we have $\mathbf{B}\mathbf{1}_{MN\times1}=\mathbf{1}_{MN\times1}$, $\mathbf{B}\mathbf{B}^\intercal=\mathbf{B}^2=\mathbf{B}$, $\mathbf{B}\mathbf{A}^\intercal=\mathbf{0}_{MN\times(M-1)(N-1)}$, therefore
\begin{align}
    & \alpha_1+(N-1)\alpha_2+(M-1)\alpha_3 +(M-1)(N-1)\alpha_4=1, \nonumber\\
    & \alpha_1^2+(N-1)\alpha_2^2+(M-1)\alpha_3^2+(M-1)(N-1)\alpha_4^2=\alpha_1, \nonumber \\
     & \alpha_1-\alpha_2-\alpha_3+\alpha_4=0. 
\end{align}
By solving the above equations, we have
\begin{align}
    &\alpha_1=\frac{M+N-1}{MN}, 
     \ \  \ \alpha_2=\frac{M-1}{MN}, \nonumber \\
    &\alpha_3=\frac{N-1}{MN}, 
     \ \ \  \ \ \ \ \ \ \  \alpha_4=-\frac{1}{MN}.
\end{align}

For the monostatic configuration, $M=N$ and we have $\alpha_1=\frac{2M-1}{M^2}$, $\alpha_2=\alpha_3=\frac{M-1}{M^2}$ and $\alpha_4=-\frac{1}{M^2}$. Therefore, Lemma \ref{Lemma_z} holds.
\end{proof}

\section{Performance analysis}
In this section, we study the performance of the proposed estimators and show that by how much the enhancement can be achieved. We also derive the CRLBs for MIMO backscatter TOA estimation and show that the proposed estimators achieve the bounds.  



\subsection{The performance of the proposed estimator}
In this subsection, we derive the performance of the proposed estimators for the bistatic and the monostatic typologies and summarize them in Theorem \ref{thebi} and Theorem \ref{themo}, respectively.
\subsubsection{Bistatic configuration}
\begin{Theorem}\label{thebi}
For the general $M\times N$ bistatic configuration, assuming the noises on the receivers are independent and identically distributed (i.i.d.), then the MSE of the proposed estimator in \eqref{TOAbicloseform} is 
\begin{align}
\frac{M+N-1}{MN}\sigma^2_0,
\end{align}
where $\sigma^2_0$ is the MSE of the conventional least square estimator in \eqref{TOAconvcloseform}. 
\end{Theorem}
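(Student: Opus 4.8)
The plan is to compute the mean square error of the proposed estimator $\widetilde{\mathbf{t}} = \mathbf{B}\hat{\mathbf{t}}$ directly from the known properties of $\mathbf{B}$ established in Lemma \ref{lemma_1} and Lemma \ref{Lemma_z}. First I would recall that the conventional LS estimator $\hat{\mathbf{t}} = \mathbf{S}^\dagger \mathbf{y}$ is unbiased with covariance matrix $\operatorname{Cov}(\hat{\mathbf{t}}) = \sigma^2 (\mathbf{S}^\intercal\mathbf{S})^{-1} = \frac{\sigma^2}{L}\mathbf{I}_{MN}$, since $\mathbf{S}^\intercal\mathbf{S} = L\mathbf{I}_{MN}$; thus each component of $\hat{\mathbf{t}}$ has variance $\sigma_0^2 := \sigma^2/L$, which is the quantity the theorem calls the MSE of the conventional estimator. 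Because the true delay vector $\mathbf{t}$ satisfies $\mathbf{A}\mathbf{t} = \mathbf{0}$ (Lemma \ref{lemma: equivalence}), we get $\mathbf{B}\mathbf{t} = (\mathbf{I}_{MN} - \mathbf{A}^\intercal(\mathbf{A}\mathbf{A}^\intercal)^{-1}\mathbf{A})\mathbf{t} = \mathbf{t}$, so $\widetilde{\mathbf{t}}$ is also unbiased: $\operatorname{E}\{\widetilde{\mathbf{t}}\} = \mathbf{B}\operatorname{E}\{\hat{\mathbf{t}}\} = \mathbf{B}\mathbf{t} = \mathbf{t}$.

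Next I would propagate the covariance through the linear map $\mathbf{B}$: $\operatorname{Cov}(\widetilde{\mathbf{t}}) = \mathbf{B}\operatorname{Cov}(\hat{\mathbf{t}})\mathbf{B}^\intercal = \sigma_0^2\,\mathbf{B}\mathbf{B}^\intercal = \sigma_0^2\,\mathbf{B}$, where the last equality uses that $\mathbf{B}$ is symmetric and idempotent (parts (3) and (4) of Lemma \ref{lemma_1}). Hence the MSE of the estimate of any single subchannel delay equals the corresponding diagonal entry of $\sigma_0^2\mathbf{B}$, and by Lemma \ref{Lemma_z} every diagonal entry of $\mathbf{B}$ equals $\alpha_1 = \frac{M+N-1}{MN}$. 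Therefore the per-subchannel MSE is $\frac{M+N-1}{MN}\sigma_0^2$, and averaging over all $MN$ subchannels (or taking $\frac{1}{MN}\operatorname{E}\{\|\widetilde{\mathbf{t}} - \mathbf{t}\|^2\} = \frac{\sigma_0^2}{MN}\operatorname{trace}(\mathbf{B}) = \frac{\sigma_0^2}{MN}(M+N-1)$) gives the same value, which is the claimed result.

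I expect the only real subtlety to be bookkeeping about what ``MSE'' means and making the normalization explicit — i.e., clarifying that $\sigma_0^2$ is the common per-component variance of the LS estimator and that the stated MSE is the (identical) per-subchannel mean square error — together with justifying that the i.i.d.\ noise assumption yields $\operatorname{Cov}(\mathbf{w}) = \sigma^2 \mathbf{I}_{LMN}$ and hence $\operatorname{Cov}(\hat{\mathbf{t}}) = \sigma^2(\mathbf{S}^\intercal\mathbf{S})^{-1}$. Everything else is a one-line consequence of the algebraic identities $\mathbf{B}\mathbf{t}=\mathbf{t}$, $\mathbf{B}\mathbf{B}^\intercal = \mathbf{B}$, and $\operatorname{trace}(\mathbf{B}) = M+N-1$ already in hand, so there is no genuine obstacle; the proof is essentially a short substitution argument rather than anything requiring new machinery.
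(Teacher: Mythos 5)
Your proof is correct, and it rests on exactly the same ingredients as the paper's: the identities $\mathbf{B}\mathbf{B}^\intercal=\mathbf{B}^2=\mathbf{B}$ from Lemma \ref{lemma_1} and $[\mathbf{B}]_{z,z}=\alpha_1=\frac{M+N-1}{MN}$ from Lemma \ref{Lemma_z}, together with $\mathbf{B}\mathbf{t}=\mathbf{t}$. The only difference is presentational: you propagate the full covariance matrix, $\operatorname{Cov}(\widetilde{\mathbf{t}})=\mathbf{B}\operatorname{Cov}(\hat{\mathbf{t}})\mathbf{B}^\intercal=\sigma_0^2\mathbf{B}$, and read off the diagonal, whereas the paper works entry by entry with a $u$--$v$ error decomposition and arrives at the intermediate formula $\operatorname{E}\{(t_z-\widetilde{t}_z)^2\}=\sum_r[\mathbf{B}]_{z,r}^2\sigma_r^2$. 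That element-wise intermediate step is what the paper reuses immediately after the theorem for noises that are independent but not identically distributed; your matrix version recovers it just as easily by keeping $\operatorname{Cov}(\hat{\mathbf{t}})=\operatorname{diag}(\sigma_1^2,\dots,\sigma_{MN}^2)$ rather than specializing to $\sigma_0^2\mathbf{I}_{MN}$, so nothing is lost, and your explicit unbiasedness argument via $\mathbf{B}\mathbf{t}=\mathbf{t}$ makes transparent a step the paper only uses implicitly.
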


\begin{proof}
For the $z$-th element $t_z$ of $\mathbf{t}$, where $z\in\{1,2,\dots,MN\}$, the MSE of conventional LS estimator is
\begin{align}
    \operatorname{E}\left\{(t_z-\hat{t}_z)^2\right\}=\sigma_z^2,
\end{align}
where $\hat{t}_z$ is the $z$-th element of $\hat{\mathbf{t}}$, the MSE of the proposed estimator is:
\begin{align}
    &\operatorname{E}\left\{\left(t_z-\widetilde{t}_{z}\right)^2\right\}
    =\operatorname{E}\left\{\left(t_z-\mathbf{b}_z\hat{\mathbf{t}}\right)^2\right\} \nonumber \\
    & \qquad=\operatorname{E}\left\{\left(t_z-\sum_{r} [\mathbf{B}]_{z,r}\hat{t}_r\right)^2\right\} \nonumber\\
    &\qquad=\operatorname{E}\Bigg\{\Bigg([\mathbf{B}]_{z,z}(t_z-\hat{t}_z)  \nonumber \\
    & \qquad \qquad \quad \left. \left. -\left(([\mathbf{B}]_{z,z}-1){t}_z+\sum_{r\neq z}[\mathbf{B}]_{z,r}\hat{t}_r\right)\right)^2\right\}, 
\end{align}
where $\widetilde{t}_{z}$ is the $z$-th element of $\widetilde{\mathbf{t}}$ and $\mathbf{b}_z=([\mathbf{B}]_{z,1}, [\mathbf{B}]_{z,2}, \dots, [\mathbf{B}]_{z,MN})$ is the $z$-th row of $\mathbf{B}$, $r\in \{1,2,\dots, MN\}$. Let $u=[\mathbf{B}]_{z,z}(t_z-\hat{t}_z)$ and $v=([\mathbf{B}]_{z,z}-1){t}_z+\sum_{r\neq z}[\mathbf{B}]_{z,r}\hat{t}_r$, then
\begin{align}
    \operatorname{E}\left\{\left(t_z-\widetilde{t}_{z}\right)^2\right\}=&\operatorname{E}\left\{(u-v)^2\right\} \nonumber \\
    =&\operatorname{E}\left\{u^2\right\}+\operatorname{E}\left\{2uv\right\}+\operatorname{E}\left\{v^2\right\}.
\end{align}
Note that
\begin{align}
   \operatorname{E}\left\{u^2\right\} &=\operatorname{E}\left\{([\mathbf{B}]_{z,z}(t_z-\hat{t}_z))^2\right\} \nonumber \\
    &=[\mathbf{B}]_{z,z}^2\operatorname{E}\left\{(t_z-\hat{t}_z)^2\right\}
    =[\mathbf{B}]_{z,z}^2\sigma_z^2,
\end{align}
and    
\begin{align}    
    \operatorname{E}\left\{uv\right\}&=\operatorname{E}\left\{u\right\}\operatorname{E}\left\{v\right\} =0
\end{align}
since $\operatorname{E}\left\{u\right\}=\operatorname{E}\Bigg\{[\mathbf{B}]_{z,z}(t_z-\hat{t}_z)\Bigg\}=0$,
and
\begin{align}    
    \operatorname{E}\left\{v^2\right\}&=\operatorname{E}\left\{\left(([\mathbf{B}]_{z,z}-1){t}_z+\sum_{r\neq z}[\mathbf{B}]_{z,r}\hat{t}_r\right)^2\right\} \nonumber \\
    &= \operatorname{E}\Bigg\{\Bigg(([\mathbf{B}]_{z,z}-1){t}_z+\sum_{r\neq z}[\mathbf{B}]_{z,r}{t}_r 
    \nonumber \\
    & \qquad \quad +\sum_{r\neq z}[\mathbf{B}]_{z,r}(\hat{t}_r-t_r)\Bigg)^2\Bigg\} \nonumber \\
    &= \operatorname{E}\left\{\left(\mathbf{b}_z{\mathbf{t}}-t_z+\sum_{r\neq z}[\mathbf{B}]_{z,r}(\hat{t}_r-t_r)\right)^2\right\} \nonumber \\
    &= \operatorname{E}\left\{\left(\sum_{r\neq z}[\mathbf{B}]_{z,r}(\hat{t}_r-t_r)\right)^2\right\}\nonumber \\ &=\sum_{r\neq z}[\mathbf{B}]^2_{z,r}\sigma_{r}^2.
\end{align}
Therefore, the MSE of the proposed estimator is:
\begin{align}
   \operatorname{E}\left\{\left(t_z-\widetilde{t}_{z}\right)^2\right\}&=[\mathbf{B}]^2_{z,z}\sigma_{z}^2+\sum_{r\neq z}[\mathbf{B}]^2_{z,r}\sigma_{r}^2\nonumber \\ &=\sum_r[\mathbf{B}]_{z,r}^2\sigma_{r}^2.
    \label{thederbi}
\end{align}
If the noises on the receivers are i.i.d., i.e. $\sigma_1=\sigma_2=\dots=\sigma_{MN}=\sigma_0$, the MSE of the proposed estimator is $\sum_r[\mathbf{B}]_{z,r}^2\sigma_0^2$. According to Lemma \ref{lemma_1}, we have $\mathbf{B}^\intercal\mathbf{B}=\mathbf{B}^2=\mathbf{B}$, hence 
\begin{align}
    \sum_r[\mathbf{B}]_{z,r}^2=\mathbf{b}_z\mathbf{b}_z^\intercal=[\mathbf{B}]_{z,z}.
\end{align}
According to Lemma \ref{Lemma_z}, we have
\begin{align}
    [\mathbf{B}]_{z,z}=\alpha_1=\frac{M+N-1}{MN},
\end{align}
so the MSE of the proposed estimator is
\begin{align}
    \operatorname{E}\left\{\left(t_z-\widetilde{t}_{z}\right)^2\right\}=[\mathbf{B}]_{z,z}\sigma_0^2=\frac{M+N-1}{MN}\sigma^2_0. 
    \label{msebi}
\end{align}
Therefore Theorem \ref{thebi} holds.
\end{proof}
It is worth mentioning here that for just independent case, equation \eqref{thederbi} is also applicable. For example, when $M=N=2$, for $z=1$, i.e., for the first channel, the MSE is given by
\begin{align}
\operatorname{E}\left\{\left(t_1-\widetilde{t}_{1}\right)^2\right\}&=\alpha_1^2\sigma_1^2+\alpha_3^2\sigma_2^2 +\alpha_2^2\sigma_3^2+\alpha_4^2\sigma_4^2. \nonumber\\&=\frac{9}{16}\sigma_1^2+\frac{1}{16}\sigma_2^2+\frac{1}{16}\sigma_3^2+\frac{1}{16}\sigma_4^2.
\end{align}

\subsubsection{Monoistatic configuration}
\begin{Theorem}\label{themo}
For the general $M\times M$ monostatic configuration, assuming the noises on the receivers are i.i.d., then the MSE of the proposed estimator in \eqref{TOAmocloseform} is 
\begin{align}
    \frac{2M-1}{M^2}\sigma_0^2
\end{align}
for the diagonal subchannels, and
\begin{align}
    \frac{M-1}{M^2}\sigma_0^2
\end{align}
for the off-diagonal subchannels, where $\sigma_0^2$ is the MSE of the conventional least square estimator in \eqref{TOAconvcloseform}.  
\end{Theorem}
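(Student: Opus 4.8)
The plan is to reduce both parts of the statement to the machinery already built for Theorem~\ref{thebi}, exploiting two special features of the monostatic case: the true delay matrix is symmetric, $\mathbf{T}=\mathbf{T}^\intercal$, and the estimator in \eqref{TOAmocloseform} is $\widetilde{\mathbf{T}}=\frac{1}{2}(\bar{\mathbf{T}}+\bar{\mathbf{T}}^\intercal)$ with $\bar{\mathbf{t}}=\mathbf{B}\hat{\mathbf{t}}$, where $\mathbf{B}=\mathbf{I}_{M^2}-\mathbf{A}^\intercal(\mathbf{A}\mathbf{A}^\intercal)^{-1}\mathbf{A}$ is exactly the bistatic weighting matrix with $N=M$. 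First I would dispose of the diagonal subchannels. For a diagonal entry, $[\widetilde{\mathbf{T}}]_{i,i}=\frac{1}{2}([\bar{\mathbf{T}}]_{i,i}+[\bar{\mathbf{T}}]_{i,i})=[\bar{\mathbf{T}}]_{i,i}$, so the symmetrization step does nothing on the diagonal and the error of $[\widetilde{\mathbf{T}}]_{i,i}$ equals the error of the corresponding component $\bar{t}_z$ of $\bar{\mathbf{t}}=\mathbf{B}\hat{\mathbf{t}}$. Since $\mathbf{A}\mathbf{t}=\mathbf{0}$ still holds in the monostatic case, the computation leading to \eqref{thederbi} applies verbatim with $\bar{\mathbf{t}}$ in place of $\widetilde{\mathbf{t}}$, giving $\operatorname{E}\{(t_z-\bar{t}_z)^2\}=\sum_r[\mathbf{B}]_{z,r}^2\sigma_0^2=[\mathbf{B}]_{z,z}\sigma_0^2$ by idempotency and symmetry of $\mathbf{B}$ (Lemma~\ref{lemma_1}). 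By Lemma~\ref{Lemma_z} with $M=N$ the diagonal entries of $\mathbf{B}$ equal $\alpha_1=\frac{2M-1}{M^2}$, which is the claimed MSE for the diagonal subchannels.

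For the off-diagonal subchannels, fix $i\neq j$ and let $z$ and $z'$ be the $\text{vec}$-indices of the entries $(i,j)$ and $(j,i)$ of $\mathbf{T}$. Then $\widetilde{t}_z=\frac{1}{2}(\bar{t}_z+\bar{t}_{z'})$, and because $\mathbf{T}=\mathbf{T}^\intercal$ the true values satisfy $t_z=t_{z'}$, hence $t_z-\widetilde{t}_z=\frac{1}{2}(t_z-\bar{t}_z)+\frac{1}{2}(t_{z'}-\bar{t}_{z'})$. Expanding the square, I need (i) the variance of each term, which equals $[\mathbf{B}]_{z,z}\sigma_0^2=\alpha_1\sigma_0^2$ as in the diagonal argument (note this is the diagonal of $\mathbf{B}$, not of $\mathbf{T}$, and equals $\alpha_1$ for every index), and (ii) the cross-covariance. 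Writing $t_z-\bar{t}_z=-\mathbf{b}_z(\hat{\mathbf{t}}-\mathbf{t})$ using $\mathbf{b}_z\mathbf{t}=t_z$, and using that the LS errors $\hat{t}_r-t_r$ are uncorrelated with variance $\sigma_0^2$, one gets $\operatorname{E}\{(t_z-\bar{t}_z)(t_{z'}-\bar{t}_{z'})\}=\mathbf{b}_z\mathbf{b}_{z'}^\intercal\sigma_0^2=[\mathbf{B}\mathbf{B}^\intercal]_{z,z'}\sigma_0^2=[\mathbf{B}]_{z,z'}\sigma_0^2$. The entry $(j,i)$ corresponds to the subchannel $h_j+g_i$, which contains neither $h_i$ nor $g_j$ because $i\neq j$, so it is a Type~4 subchannel relative to $(i,j)$ and $[\mathbf{B}]_{z,z'}=\alpha_4=-\frac{1}{M^2}$. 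Collecting the pieces, $\operatorname{E}\{(t_z-\widetilde{t}_z)^2\}=\frac{1}{4}\alpha_1\sigma_0^2+\frac{1}{2}\alpha_4\sigma_0^2+\frac{1}{4}\alpha_1\sigma_0^2=\frac{\alpha_1+\alpha_4}{2}\sigma_0^2=\frac{M-1}{M^2}\sigma_0^2$, the claimed off-diagonal MSE.

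The only real obstacle is bookkeeping rather than analysis: one must keep the ``diagonal of $\mathbf{T}$'' distinct from the ``diagonal of $\mathbf{B}$'', and correctly classify the mirror entry $(j,i)$ as Type~4 with respect to $(i,j)$ when $i\neq j$ (so its $\mathbf{B}$-weight is $\alpha_4$ and not $\alpha_2$ or $\alpha_3$). Once these are pinned down, everything follows from Lemma~\ref{lemma_1}, Lemma~\ref{Lemma_z}, and the variance identity already established in the proof of Theorem~\ref{thebi}. An alternative to step (ii) is to write $\widetilde{\mathbf{t}}=\mathbf{C}\hat{\mathbf{t}}$ for the single linear map $\mathbf{C}$ implementing $\hat{\mathbf{t}}\mapsto\text{vec}\!\big(\frac{1}{2}(\bar{\mathbf{T}}+\bar{\mathbf{T}}^\intercal)\big)$ and evaluate $[\mathbf{C}\mathbf{C}^\intercal]_{z,z}$, but the row-by-row argument above avoids constructing $\mathbf{C}$ explicitly.
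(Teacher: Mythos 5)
Your proof is correct and follows essentially the same route as the paper's: reduce the diagonal entries to the bistatic case, and for off-diagonal entries expand the symmetrized error using $\mathbf{B}\mathbf{B}^\intercal=\mathbf{B}$ to arrive at $\frac{1}{4}([\mathbf{B}]_{z,z}+[\mathbf{B}]_{\overline{z},\overline{z}}+2[\mathbf{B}]_{z,\overline{z}})\sigma_0^2$ with $[\mathbf{B}]_{z,\overline{z}}=\alpha_4$. Your only additions are explicit justifications the paper leaves implicit, namely that $t_z=t_{z'}$ is what allows the error to split as $\frac{1}{2}(t_z-\bar{t}_z)+\frac{1}{2}(t_{z'}-\bar{t}_{z'})$, and that the mirror entry is Type~4 so its weight is $\alpha_4$.
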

\begin{proof}
For the diagonal elements of $\mathbf{T}$ (the TOA of the transceiver received signal from itself), the second constraint in (\ref{TDOAmoConsObjectF}) has no effect on the result, so performance is same to the bistatic configuration when $M=N$ and the MSE is  $\frac{2M-1}{M^2}\sigma_0^2$. For the off-diagonal elements of $\mathbf{T}$ (the TOA of the transceiver received signal from other transceivers) $t_z$, let $t_{\overline{z}}$ be the $\overline{z}$-th parameter in $\mathbf{t}$, and $t_{\overline{z}}$ and $t_z$ are symmetrical elements in $\mathbf{T}$, where $\mathbf{T}=\text{unvec}_{M\times M}(\mathbf{t})$. Then the MSE of $t_z$ is
\begin{align}
    \operatorname{E}\left\{\left(t_z-\widetilde{t}_{z}\right)^2\right\}=&\operatorname{E}\left\{\left(t_z-\frac{\mathbf{b}_z\hat{\mathbf{t}}+\mathbf{b}_{\overline{z}}\hat{\mathbf{t}}}{2}\right)^2\right\}\nonumber\\
    =&\operatorname{E}\left\{\left(t_z-\sum_r\frac{[\mathbf{B}]_{z,r}+[\mathbf{B}]_{\overline{z},r}}{2}\hat{t}_r\right)^2\right\} \nonumber \\
    =& \sum_r\left(\frac{[\mathbf{B}]_{z,r}+[\mathbf{B}]_{\overline{z},r}}{2}\right)^2\sigma_r^2.
    \label{thedermo}
\end{align}
The derivation is similar to (\ref{thederbi}). 

If the noises are i.i.d., i.e., $\sigma_1=\sigma_2=\dots=\sigma_{MN}=\sigma_0$. According to Lemma \ref{lemma_1}, $\mathbf{B}\mathbf{B}^\intercal=\mathbf{B}^2=\mathbf{B}$, hence
\begin{align}
    &\sum_r\left(\frac{[\mathbf{B}]_{z,r}+[\mathbf{B}]_{\overline{z},r}}{2}\right)^2\nonumber \\
    & \qquad =\frac{1}{4}\left(\sum_r [\mathbf{B}]_{z,r}^2+\sum_r[\mathbf{B}]_{{\overline{z}},r}^2+2\sum_r[\mathbf{B}]_{z,r}[\mathbf{B}]_{{\overline{z}},r}\right) \nonumber \\
    &\qquad =\frac{1}{4}([\mathbf{B}]_{z,z}+[\mathbf{B}]_{\overline{z},\overline{z}}+2[\mathbf{B}]_{z,\overline{z}}).
\end{align}
According to Lemma \ref{Lemma_z}, we have
\begin{align}
    &[\mathbf{B}]_{z,\overline{z}}=[\mathbf{B}]_{\overline{z},z}=\alpha_4=-\frac{1}{M^2}, \nonumber \\
    &[\mathbf{B}]_{z,z}=[\mathbf{B}]_{\overline{z},\overline{z}}=\alpha_1=\frac{2M-1}{M^2},
\end{align}
so the MSE of the off-diagonal elements of the proposed estimator is 
\begin{align}
    \operatorname{E}\left\{\left(t_z-\widetilde{t}_{z}\right)^2\right\}=&\left(\frac{1}{4}([\mathbf{B}]_{z,z}+[\mathbf{B}]_{\overline{z},\overline{z}}+2[\mathbf{B}]_{z,\overline{z}})\right)\sigma_0^2 \nonumber \\
    =&\frac{M-1}{M^2}\sigma_0^2.
\end{align}
Therefore Theorem \ref{themo} holds.
\end{proof}

It is worth mentioning here that for the just independent case, equation \eqref{thedermo} is also applicable. For example, when $M=2$, for the off-diagonal channel ($z=2$, $\overline{z}=3$), the MSE is given by
\begin{align}
    \operatorname{E}\left\{\left(t_2-\widetilde{t}_{2}\right)^2\right\}&=(\frac{\alpha_2+\alpha_3}{2})^2\sigma_1^2+(\frac{\alpha_1+\alpha_4}{2})^2\sigma_2^2 \nonumber \\& \ \  \ +(\frac{\alpha_4+\alpha_1}{2})^2\sigma_3^2+(\frac{\alpha_2+\alpha_3}{2})^2\sigma_4^2. \nonumber\\&=\frac{1}{16}\sigma_1^2+\frac{1}{16}\sigma_2^2+\frac{1}{16}\sigma_3^2+\frac{1}{16}\sigma_4^2.
\end{align}

Theorems \ref{thebi} and \ref{themo} showed that the proposed estimator can refine precision from the topological information of the MIMO backscatter channels, and the precision increases with the increase of the MIMO scale, i.e., the increase of $M$ and $N$.


\subsection{Cramer-Rao lower bounds}
In this section, we derive the CRLB for TOA estimation of the MIMO backscatter channels, and show that the proposed estimator is optimal.

\subsubsection{Bistatic configuration}
Since $\mathbf{t}$ is a transformation of $(\mathbf{h}^\intercal, \mathbf{g}^\intercal)^\intercal$ (for the bistatic configuration) or $\mathbf{h}$ (for the monostatic configuration), in general, we can obtain the CRLB of the estimation of $\mathbf{t}$ by deriving the CRLB of $(\mathbf{h}^\intercal, \mathbf{g}^\intercal)^\intercal$ or $\mathbf{h}$. In order to derive the CRLB of $(\mathbf{h}^\intercal, \mathbf{g}^\intercal)^\intercal$ or $\mathbf{h}$, we need to derive the  Fisher   information matrix of $(\mathbf{h}^\intercal, \mathbf{g}^\intercal)^\intercal$ or $\mathbf{h}$ first, and the  covariance matrix is the inverse matrix of the Fisher information matrix. However, for the bistatic configuration, the Fisher information matrix is not full rank. Therefore, we cannot derive the CRLB directly in this case, instead, we need to employ the result from \cite{gorman1990lower}. 

\begin{remark}[Theorem 1 in \cite{gorman1990lower}]
\label{TIT}
Let the parameter space $\mathbf{t}$ be defined by the consistent set of equality constraints: $\mathbf{A}\mathbf{t}=\mathbf{0}$, where the constraints are continuously differentiable. Then for any estimator $\hat{\mathbf{t}}$ having mean $\mathbf{m}_{\mathbf{t}}$, the estimator error covariance matrix $\mathbf{C}_{\hat{\mathbf{t}}}$ satisfies the matrix inequality
\begin{align}
    \mathbf{C}_{\hat{\mathbf{t}}} \geq \frac{\partial\mathbf{m}_{\mathbf{t}}}{\partial \mathbf{t}} \mathbf{Q}_{\mathbf{t}} \bm{J}_{\mathbf{t}}^{-1}\frac{\partial\mathbf{m}_{\mathbf{t}}}{\partial \mathbf{t}}^\intercal,
\end{align}
where idempotent matrix $\mathbf{Q}_{\mathbf{t}}$ is given as follow 
\begin{align}
    \mathbf{Q}_{\mathbf{t}}=\mathbf{I}-\bm{J}_{\mathbf{t}}^{-1}\mathbf{A}^\intercal(\mathbf{A}\bm{J}_{\mathbf{t}}^{-1}\mathbf{A}^\intercal)^\dagger\mathbf{A},
\end{align}
where $\bm{J}_{\mathbf{t}}$ is the Fisher information matrix of model without constraints.
\end{remark}

For the MIMO bistatic TOA channel, we transfer the topology structure into the equivalent linear constraints (Lemma \ref{lemma: equivalence}). Since $\bm{J}_{\mathbf{t}}=\frac{\mathbf{S}^\intercal\mathbf{S}}{\sigma^2}=\frac{L}{\sigma^2}\mathbf{I}_{MN}$ and for unbiased estimator, $\frac{\partial\mathbf{m}_{\mathbf{t}}}{\partial \mathbf{t}}=\mathbf{I}_{MN}$ \cite{gorman1990lower}, the covariance matrix for any estimator of ${\mathbf{t}}$ satisfies
\begin{align}
    \mathbf{C}_{\hat{\mathbf{t}}} &\geq\frac{\partial\mathbf{m}_{\mathbf{t}}}{\partial \mathbf{t}} \mathbf{Q}_{\mathbf{t}} \bm{J}_{\mathbf{t}}^{-1}\frac{\partial\mathbf{m}_{\mathbf{t}}}{\partial \mathbf{t}}^\intercal \nonumber \\
    &=\frac{\sigma^2}{L}(\mathbf{I}_{MN}-\mathbf{A}^\intercal(\mathbf{A}\mathbf{A}^\intercal)^\dagger\mathbf{A}).
\end{align}
Since $\sigma_0^2=\frac{\sigma^2}{L}$, the lower bound of for any TOA estimator is
\begin{align}
    \operatorname{Var}(\hat{t}_i)={[\mathbf{C}_{\hat{\mathbf{t}}}]_{i,i}}\geq\frac{M+N-1}{MNL}\sigma^2=\frac{M+N-1}{MN}\sigma_0^2,
\end{align}
Hence, for the bistatic configuration, the proposed estimator achieves CRLB and is optimal in terms of MSE. 

\subsubsection{Monostatic configuration}
 Different from the derivation of CRLB for the bistatic configuration, for the monostatic configuration, the Fisher information matrix is full rank, so we derive the CRLB for the monostatic configuration in a general way, i.e., obtain the CRLB of the estimation of $\mathbf{t}$ by deriving the CRLB of $\mathbf{h}$.

For the conventional least square estimator, $\hat{\mathbf{t}}\sim \mathcal{N}(\mathbf{t}, \mathbf{C}_{\hat{\mathbf{t}}})$, $\mathbf{C}_{\hat{\mathbf{t}}}=\sigma^2(\mathbf{S}^\intercal\mathbf{S})^{-1}=\frac{\sigma^2}{L}\mathbf{I}_{MN}$. For the desired estimation parameter $\mathbf{h}=({h_1}, {h_2}, \dots, {h_M})$, the Fisher information matrix has element \cite{kay1993fundamentals}
\begin{align}
    [\mathbf{J}_\mathbf{h}]_{i,j}=\frac{\partial \mathbf{t}^\intercal}{\partial {h_i}}\mathbf{C}_{\hat{\mathbf{t}}}^{-1}\frac{\partial \mathbf{t}}{\partial {h_j}}.
\end{align}
Then we have
\begin{align}
    \left\{
    \begin{array}{l l l}
    [\mathbf{J}_\mathbf{h}]_{i,j}=\frac{2L}{\sigma^2}(M+1)
    \quad {\text{if $i=j$}}, \\ 
    {[\mathbf{J}_\mathbf{h}]}_{i,j}=\frac{2L}{\sigma^2}
    \quad \ \ \ \ \ \ \ \ \ \ {\text{if $i \neq j$}}.
  \end{array} \right.
\end{align}
The Fisher information matrix is given by
\begin{align}
    \mathbf{J}_\mathbf{h}=\frac{2L}{\sigma^2}\begin{pmatrix}M+1 & 1 & \dots &1 \\
    1 & M+1 & \dots &1 \\
    \vdots&  & \ddots &\vdots \\
    1& 1 & \dots &M+1 \\\end{pmatrix},
\end{align}
so the covariance matrix for any estimator of ${\mathbf{h}}$ is
\begin{align}
    \mathbf{C}_{\hat{\mathbf{h}}}\geq{\mathbf{J}}_{\mathbf{h}}^{-1}=\frac{\sigma^2}{4L}\begin{pmatrix}\frac{2M-1}{M^2} & -\frac{1}{M^2} & \dots &-\frac{1}{M^2} \\
    -\frac{1}{M^2} & \frac{2M-1}{M^2} & \dots &-\frac{1}{M^2} \\
    \vdots&  & \ddots &\vdots \\
    -\frac{1}{M^2}& -\frac{1}{M^2} & \dots &\frac{2M-1}{M^2} \\\end{pmatrix}.
\end{align}
Since $\sigma_0^2=\frac{\sigma^2}{L}$, the CRLB of the diagonal subchannels of $\mathbf{T}$ is
\begin{align}
    \operatorname{Var}(\hat{h}_i+\hat{h}_i)=4\operatorname{Var}(\hat{h}_i)=4{[\mathbf{C}_{\hat{\mathbf{h}}}]}_{i,i}\geq\frac{2M-1}{M^2}\sigma_0^2.
\end{align}
The CRLB of the off-diagonal subchannels of $\mathbf{T}$ is
\begin{align}
    \operatorname{Var}(\hat{h}_i+{\hat{h}_j})&=\operatorname{Var}(\hat{h}_i)+\operatorname{Var}({\hat{h}_j})+2\operatorname{Cov}(\hat{h}_i,{\hat{h}_j})\nonumber \\
    &={[\mathbf{C}_{\hat{\mathbf{h}}}]}_{i,i}+{[\mathbf{C}_{\hat{\mathbf{h}}}]}_{j,j}+2{[\mathbf{C}_{\hat{\mathbf{h}}}]}_{i,j}\nonumber \\
    &\geq\left(\frac{2M-1}{M^2}-\frac{1}{M^2}\right)\frac{\sigma^2_0}{2}\nonumber \\ &=\frac{M-1}{M^2}\sigma_0^2.
\end{align}
Clearly, for the monostatic configuration, the proposed estimator achieves CRLB and is optimal in terms of MSE. 

It is worth mentioning here that both the work in \cite{yansouni1984use} and our work employ geometry/topology constraint to improve the estimation. However, \cite{yansouni1984use} studied the geometry constraint that imposed to the conventional TDOA model, while we studied the topological constraint of the MIMO backscatter TOA model, which, to our best knowledge has never been studied before. As we can see, the geometry constraint found in \cite{yansouni1984use} and the topological constraint found in our work yield completely different constraint matrices, therefore the estimation problems, as well as the results, are all completely different.



\section{Simulations}
In this section, we validate the performance of the proposed TOA estimators, and compare them with the conventional estimator. 
The locations of tag and transceiver antennas are uniformly initialize  in the 3D cubic area with side length $10$ meters, and we set the true TOA is ${t}=\frac{l}{c}$, where $l$ is the distance and $c$ is the speed of light. All the simulation results plotted here have been obtained numerically after averaging over at least $10^4$ independent channel realizations. 

For the bistatic configuration, Fig. \ref{fig: es_bi_M4N3} and Fig. \ref{fig: es_bi_M6N6} show the MSEs of the conventional estimator and the proposed estimator versus $\sigma$ with various antenna settings and training lengths. It is clear that the proposed estimator outperforms  the conventional estimator considerably and the simulation results match well with the theoretical results. The simulation results for the monostatic configuration are given in Fig. \ref{fig: es_mo_M6} and Fig. \ref{fig: es_mo_M8}. It is clear that we have the similar conclusion as that of the bistatic configuration.
\begin{figure}
\centering
{\includegraphics[width=1\columnwidth]{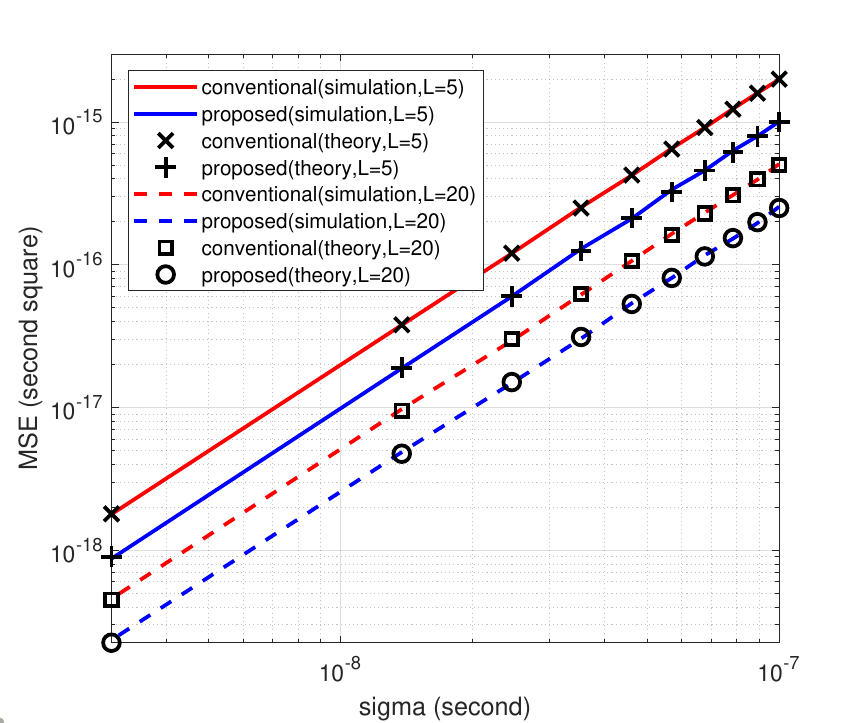}}
\caption{For the bistatic configuration: the MSEs of the conventional estimator and the proposed estimator versus $\sigma$ and training length $L$, with $M=4$, $N=3$. }
\label{fig: es_bi_M4N3}
\end{figure}

\begin{figure}
\centering
{\includegraphics[width=1\columnwidth]{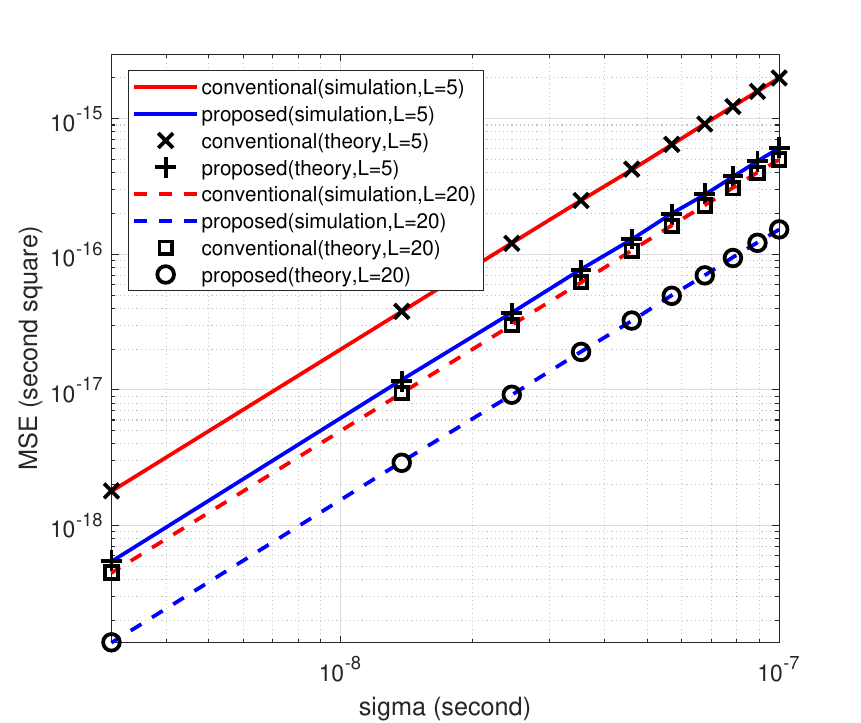}}
\caption{For the bistatic configuration: the MSEs of the conventional estimator and the proposed estimator versus $\sigma$ and training length $L$, with $M=N=6$. }
\label{fig: es_bi_M6N6}
\end{figure}

\begin{figure}
\centering
{\includegraphics[width=1\columnwidth]{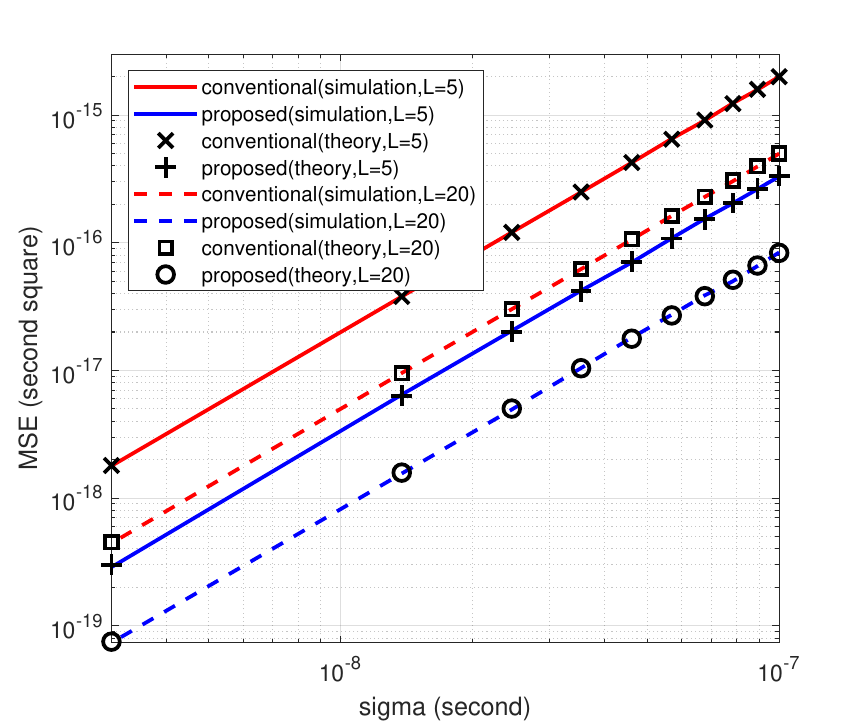}}
\caption{For the monostatic configuration: the MSEs of the conventional estimator and the proposed estimator versus $\sigma$ and training length $L$, with $M=6$. }
\label{fig: es_mo_M6}
\end{figure}

\begin{figure}
\centering
{\includegraphics[width=1\columnwidth]{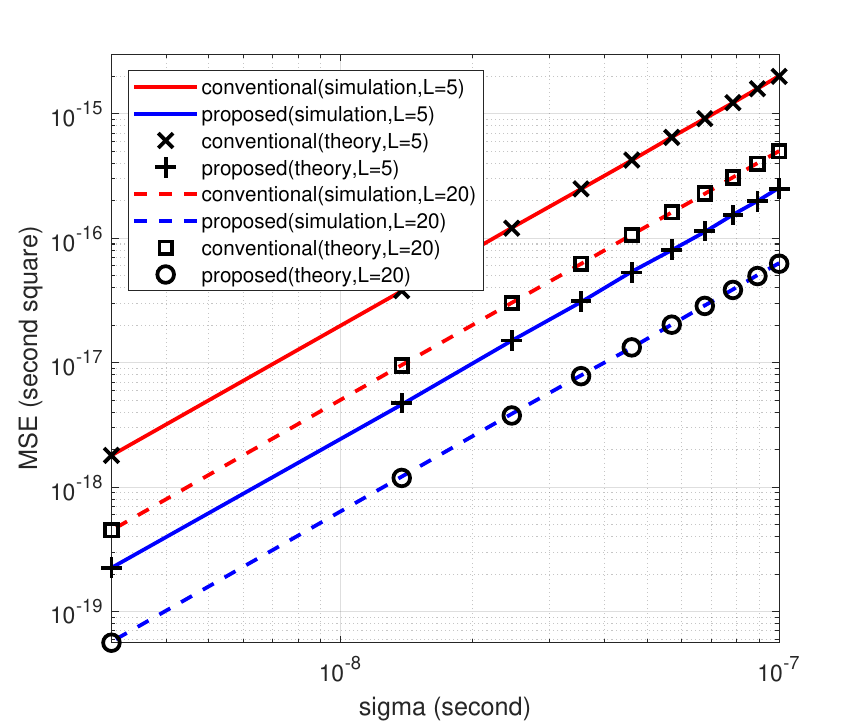}}
\caption{For the monostatic configuration: the MSEs of the conventional estimator and the proposed estimator versus $\sigma$ and training length $L$, with $M=8$. }
\label{fig: es_mo_M8}
\end{figure}

\begin{figure}
\centering
{\includegraphics[width=1\columnwidth]{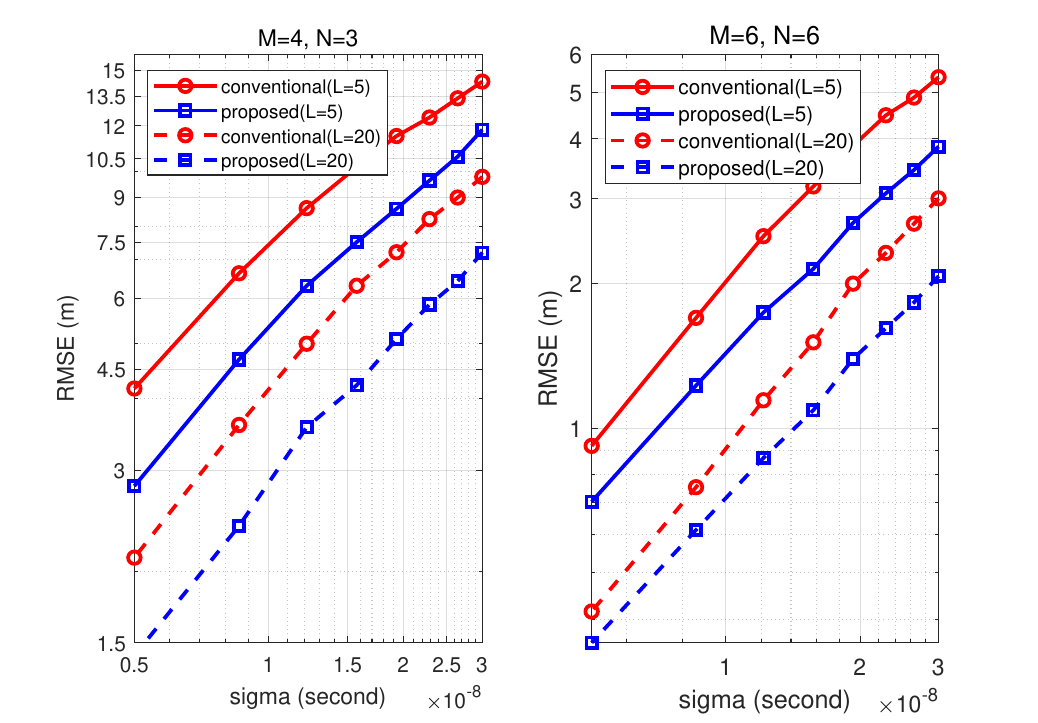}}
\caption{For the bistatic configuration: the RMSEs of TOA localization by employing the conventional estimator and the proposed estimator versus $\sigma$ with various antenna settings and training length.}
\label{fig: lo_bi_M4N3}
\end{figure}

\begin{figure}
\centering
{\includegraphics[width=1\columnwidth]{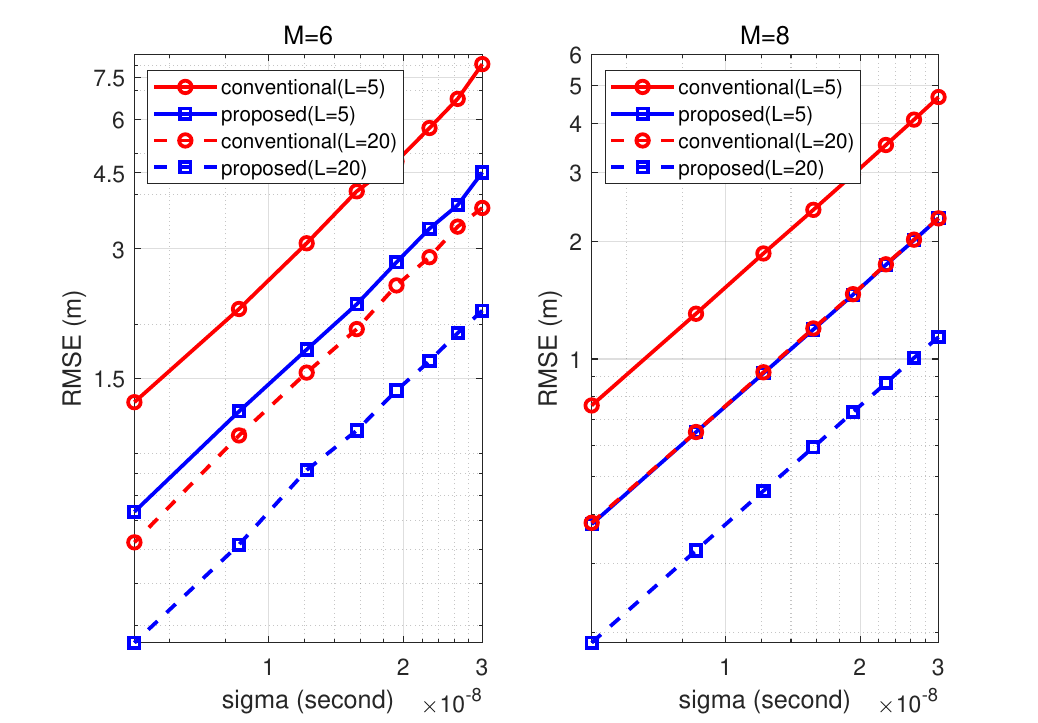}}
\caption{For the monostatic configuration: the RMSEs of TOA localization by employing the conventional estimator and the proposed estimator versus $\sigma$ with various antenna settings and training length.}
\label{fig: lo_mo_M6}
\end{figure}

For localization, we consider the most commonly employed localization algorithm \cite{chan1994simple, einemo2015weighted} for TOA localization. For the bistatic configuration, Fig. \ref{fig: lo_bi_M4N3} shows the root mean square (RMSE) of TOA localization with the conventional TOA estimator and the proposed TOA estimator versus $\sigma$ with various antenna settings and training lengths. We can see that the localization accuracy of the proposed TOA estimator considerably outperforms the conventional TOA estimator. For the monostatic configuration, Fig. \ref{fig: lo_mo_M6} shows the similar conclusion.


\section{Conclusion}
In this paper, we focused on the TOA estimation for the MIMO backscatter channels. By investigating the topological structures of the MIMO backscatter channels, we proposed a novel TOA estimator in closed form that can significantly enhance the estimation accuracy. We showed that the MSE of the proposed estimator for the $M\times N$ bistatic configuration is  $\frac{M+N-1}{MN}\sigma_0^2$, and for the monostatic configuration, the MSE is $\frac{2M-1}{M^2}\sigma_0^2$ for the diagonal subchannels, and $\frac{M-1}{M^2}\sigma_0^2$ for the off-diagonal subchannels, where $\sigma_0^2$ is the MSE of conventional estimator. In addition, we derived the Cramer-Rao lower  bound for the MIMO backscatter TOA estimation to verify that the proposed estimator is optimal. Finally, numerical results confirmed that the proposed estimator outperformed the conventional estimator and can significantly improve the positioning accuracy.

\bibliographystyle{IEEEtran}
\bibliography{IEEE}

\appendix

\begin{proof}

In order to prove solution of \eqref{TOAmocloseform} is also in the feasible set defined by the first constraint in \eqref{TDOAmoConsObjectF}, we only need to prove each $2\times 2$ submatrix $\bar{\mathbf{D}}$ of $\bar{\mathbf{T}}$ satisfied $\begin{pmatrix}
    1 & -1 & -1 & 1
    \end{pmatrix}\text{vec}(\bar{\mathbf{D}}) = 0$.
Under the first constraint solely, let $2\times 2$ submatrix of $\bar{\mathbf{T}}$ be
\begin{align}
    \bar{\mathbf{D}}=\begin{pmatrix}
    {[\bar{\mathbf{T}}]}_{p_x,p_y} &  {[\bar{\mathbf{T}}]}_{p_x,p_y+1} \\ {[\bar{\mathbf{T}}]}_{p_x+1,p_y} &  {[\bar{\mathbf{T}}]}_{p_x+1,p_y+1}
    \end{pmatrix},
\end{align}
where $p_x\in \{1, \dots, M - 1\}$ and $p_y\in \{1, \dots, N - 1\}$. 
Then the solution by the second constraint can be expressed as
\begin{align}
    \widetilde{\mathbf{D}} =\begin{pmatrix}
    \frac{\bar{[\mathbf{T}]}_{p_x,p_y}+\bar{[\mathbf{T}]}_{p_y,p_x}}{2} &  \frac{\bar{[\mathbf{T}]}_{p_x,p_y+1}+\bar{[\mathbf{T}]}_{p_y+1,p_x}}{2} \\ \frac{\bar{[\mathbf{T}]}_{p_x+1,p_y}+\bar{[\mathbf{T}]}_{p_y,p_x+1}}{2} &  \frac{\bar{[\mathbf{T}]}_{p_x+1,p_y+1}+\bar{[\mathbf{T}]}_{p_y+1,p_x+1}}{2}
    \end{pmatrix}.
\end{align}

Since the solution by the first constraint satisfied
\begin{align}
    \bar{[\mathbf{T}]}_{p_x,p_y} + \bar{[\mathbf{T}]}_{p_x+1,p_y+1} = \bar{[\mathbf{T}]}_{p_x,p_y+1} + \bar{[\mathbf{T}]}_{p_x+1,p_y},
\end{align}
\begin{align}
    \bar{[\mathbf{T}]}_{p_y,p_x} + \bar{[\mathbf{T}]}_{p_y+1,p_x+1} = \bar{[\mathbf{T}]}_{p_y,p_x+1} + \bar{[\mathbf{T}]}_{p_y+1,p_x},
\end{align}
then we have
\begin{align}
     &\frac{\bar{[\mathbf{T}]}_{p_x,p_y}+\bar{[\mathbf{T}]}_{p_y,p_x}}{2}+\frac{\bar{[\mathbf{T}]}_{p_x+1,p_y+1}+\bar{[\mathbf{T}]}_{p_y+1,p_x+1}}{2} \nonumber\\
    =&\frac{\bar{[\mathbf{T}]}_{p_x,p_y+1}+\bar{[\mathbf{T}]}_{p_y+1,p_x}}{2}+\frac{\bar{[\mathbf{T}]}_{p_x+1,p_y}+\bar{[\mathbf{T}]}_{p_y,p_x+1}}{2},
\end{align}
i.e., each submatrix $\bar{\mathbf{D}}$ satisfied
\begin{align}
    \begin{pmatrix}
    1 & -1 & -1 & 1
    \end{pmatrix}\text{vec}(\bar{\mathbf{D}}) = 0.
\end{align}
Therefore, the solution by the second constraint still satisfies the first constraint.

\end{proof}

\end{document}